\providecommand{\U}[1]{\protect\rule{.1in}{.1in}}
\newtheorem{theorem}{Theorem}
\newtheorem{lemma}[theorem]{Lemma}
\newtheorem{definition}[theorem]{Definition}
\newtheorem{assumption}{Assumption}
\newtheorem{problem}{Problem}
\newtheorem{example}[theorem]{Example}
\newtheorem{remark}[theorem]{Remark}
\newenvironment{proof}{\noindent{\em Proof:}}{$\Box$~\\}
\def\revise#1{{#1}}
\begin{document}

\title{A Condition for Multiplicity Structure of Univariate Polynomials}
\author{Hoon Hong\\Department of Mathematics, North Carolina State University\\Box 8205, Raleigh, NC 27695, USA\\hong@ncsu.edu\\[10pt] Jing Yang\thanks{Corresponding author.}\\SMS--KLSE--School of Mathemetics and Physics, Guangxi University for Nationalities\\Nanning 530006, China\\yangjing0930@gmail.com\\[-10pt] }
\date{}
\maketitle

\begin{abstract}
We consider the problem of finding a condition for a univariate polynomial
having a given multiplicity structure when the number of distinct roots is
given. It is well known that such conditions can be written as conjunctions of
several polynomial equations and one inequation in the coefficients, by using
repeated parametric gcd's. In this paper, we give a novel condition which is
not based on repeated gcd's. Furthermore, it is shown that the number of
polynomials in the condition is optimal and the degree of polynomials is
smaller than that in the previous condition based on repeated gcd's.

\end{abstract}

\section{Introduction}

In this paper, we consider the problem of finding a condition on the
coefficients of a polynomial \revise{over the complex field $\mathbb{C}$} so that it has a given multiplicity structure.
For example, consider a quartic polynomial $F=a_{4}x^{4}+a_{3}x^{3}+a_{2}%
x^{2}+a_{1}x+a_{0}$ \revise{where $a_i$'s take values over $\mathbb{C}$}. We would like to find a condition on $a_{i}$'s so that
$F$ has the multiplicity structure $(3,1)$, that is, it has two distinct
complex roots, say $r_{1}$ and $r_{2}$, where the multiplicities of $r_{1}$
and $r_{2}$ are $3$ and $1$ respectively. The problem is important because
many tasks in mathematics, science and engineering can be reduced to the
problem. A prerequisite for the problem is finding a condition on coefficients
such that the polynomials has the given number of distinct roots. This is already well studied. For instance, the subdiscriminant theory provides a
complete solution to the sub-problem. More explicitly, a univariate polynomial
of degree $n$ has $m$ distinct roots if and only if its $0$-th,$\ldots
$,$(n-m-1)$-th psd's (i.e., principal subdiscriminant coefficient) vanish and
the $(n-m)$-th psd does not. For details, see standard textbooks on
computational algebra (e.g., \cite{2006_Basu_Pollack_Roy}).

Thus from now on, we will assume that the number of distinct roots is fixed,
say $m$. However, even with this assumption, there can be several different
multiplicity structures. For example, consider again a quartic univariate
polynomial $F$. Assume that it has two distinct roots. Then its multiplicity
may be $(3,1)$ or $(2,2)$. This naturally leads to the problem: how to
discriminate the two cases? In general, the problem is stated as follows:

\medskip

\textbf{Problem}: \emph{Let }$\mu=\left(  \mu_{1},\ldots,\mu_{m}\right)
$\emph{ }be such that $\mu_{1},\ldots,\mu_{m}\geq1\ $and $\mu_{1}+\cdots
+\mu_{m}=n$\emph{. Find a condition on the coefficients of a polynomial }%
$F$\emph{  \revise{over  $\mathbb{C}$} of degree~}$n$\emph{ with }$m$\emph{ distinct \revise{complex} roots so that the
multiplicity structure of }$F$\emph{ is }$\mu$\emph{. (We will call the
condition a }$\mu$-\emph{multiplicity-discriminating condition.)}

\medskip

Due to its importance, the problem and several related problems have been
already carefully studied. In \cite{1996_Yang_Hou_Zeng}, Yang, Hou and Zeng
gave an algorithm to generate a multiplicity-discriminating condition
(referred as YHZ's condition hereinafter) by making use of repeated gcd
computation for parametric polynomials
\cite{1971_Brown,1967_Collins,1982_Loos}. It is based on a similar idea
adopted by Gonzalez-Vega et al. \cite{1998_Gonzalez_Recio_Lombardi} for
solving the real root classification and quantifier elimination problems by
using Sturm-Habicht sequences. YHZ's work was followed by Liang and Zhang
\cite{1999_Liang_Zhang} who solved the root classification of polynomials with
the form $F(x)+I\cdot G(x)$ where $F,G\in\mathbb{R}[x]$ and $I$ is the
imaginary unit. Further improvement and generalization can be found in
\cite{2006_Liang_Jefferey,2008_Liang_Jefferey_Maza}. \revise{Multiplicity structure is a particular root configuration of a univariate polynomial. In \cite{2019_Wang_Yang}, another particular root configuration is studied where there exists a symmetric triple of roots among which one root is the average of the other two. }

It is known that a multiplicity-discriminating condition can be written as a
conjunction of several polynomial equations and one inequation on the
coefficients. For example, for a quartic polynomial with two distinct roots,
see two different conditions in Example \ref{ex:running}. In general, there
are infinitely many syntactically different conditions. Thus a challenge is to
find a condition with \textquotedblleft small\textquotedblright\ size. A
natural way to measure the \textquotedblleft size\textquotedblright of the
condition is the number of polynomials appearing in the condition and their
maximum degree.

The main contribution in this paper is to provide a condition with \emph{only
one} polynomial with degree \emph{smaller} than those in the previous method.
The condition is \emph{novel} in that it is based on a significantly different
theory and techniques from the previous methods (which are essentially based
on repeated parametric gcd or subdiscriminant theory). In order to find the
new condition we developed the following ideas and techniques.

\begin{enumerate}
\item Convert the multiplicity condition in roots into an equivalent
permanental equation in roots.

\item Convert the permanent in roots into a sum of determinants in roots.

\item Convert each determinant in roots into a determinant in coefficients.
\end{enumerate}

\noindent We found that the above ideas/techniques are interesting on their
own. We hope that they could be useful for tackling other related problems.

The paper is structured as follows. In Section \ref{sec:main}, we give a
precise statement of the main result of the paper (Theorem
\ref{thm:multdiscriminant}). In Section \ref{sec:proof}, we give a proof of
Theorem \ref{thm:multdiscriminant}. The proof is long thus we divide the proof
into three subsections which are interesting on their own. In Section
\ref{sec:compare}, we compare the sizes of the multiplicity-discriminant
condition in Theorem \ref{thm:multdiscriminant} and that given by a previous work.

\section{Main Results}

\label{sec:main} In this section, we give a precise statement of the main
result of the paper. For this, we need a few notions and notations.

\begin{definition}
[Multiplicity of a polynomial]
\revise{Let $\mathbb{C}$ be the complex field }and $F$ $\in\mathbb{C}\left[  x\right]  $ be
with $m$ distinct complex roots, say $r_{1},\ldots,r_{m}$. The multiplicity of
$F$, written as $\operatorname*{mult}\left(  F\right)  $, is defined by%
\[
\operatorname*{mult}\left(  F\right)  =\left(  \mu_{1},\ldots,\mu_{m}\right)
\]
where $\mu_{i}$ is the multiplicity of $r_{i}$ as a root of $F$. Without
losing generality, we assume that $\mu_{1}\geq\cdots\geq\mu_{m}$.
\end{definition}

\begin{assumption}
We assume that $2\leq m\leq n-2$.
\end{assumption}

\begin{remark}
The assumption is natural and meaningful because otherwise there is nothing to
discriminate: If $m=1$ then the only possible multiplicity is $(n)$. If $m=n$
then the only possible multiplicity is $(1,\ldots,1)$. If $m=n-1$ then the
only possible multiplicity is $(2,1,\ldots,1)$.
\end{remark}

\begin{problem}
\revise{Let $n\geq m$ be fixed.}

\begin{description}
\item[Input:]
\revise{$\mu=\left(  \mu_{1},\ldots,\mu_{m}\right)  \ \ $such that $\mu
_{1}\geq\cdots\geq\mu_{m}\geq1$ and $\mu_{1}+\cdots+\mu_{m}=n$ }

\item[Output:]
\revise{\emph{a }$\mu$-\emph{multiplicity-discriminating condition, that
is, a condition on the coefficients of a polynomial }$F$\emph{ of degree~}$n$\emph{ with }$m$\emph{ distinct complex roots so that the multiplicity structure of
}$F$\emph{ is }$\mu$\emph{. }}
\end{description}
\end{problem}

\begin{example}
\label{ex:running} Let $F(x)=a_{4}x^{4}+a_{3}x^{3}+a_{2}x^{2}+a_{1}x+a_{0}$ be
such that
\revise{$\deg F=4$ and the number of distinct roots of $F$ is $2$}.
The followings two are $\left(  3,1\right)  $-multiplicity discriminating
conditions on $F$.

\begin{enumerate}
\item $C_{1}=0\wedge C_{2}\neq0$ where

\quad$C_{1}=-36\,{a_{4}^{3}}{a_{{3}}^{5}}a_{{1}}+12\,{a_{4}^{3}}{a_{{3}}^{4}%
}{a_{{2}}^{2}}+576\,{a_{4}^{4}}{a_{{3}}^{4}}a_{{0}}+48\,{a_{4}^{4}}{a_{{3}%
}^{3}}a_{{2}}a_{{1}}-32\,{a_{4}^{4}}{a_{{3}}^{2}}{a_{{2}}^{3}}-3072\,{a_{4}%
^{5}}{a_{{3}}^{2}}a_{{2}}a_{{0}}$

\quad\quad\quad\ $+432\,{a_{4}^{5}}{a_{{3}}^{2}}{a_{{1}}^{2}}+128\,{a_{4}^{5}%
}a_{{3}}{a_{{2}}^{2}}a_{{1}}+4096\,{a_{4}^{6}}{a_{{2}}^{2}}a_{{0}%
}-1152\,{a_{4}^{6}}a_{{2}}{a_{{1}}^{2}}$

\quad$C_{2}=16\,{a_{4}^{2}}a_{{2}}-6\,{a_{4}}{a_{{3}}^{2}}.$

\item $C_{1}^{\prime}\neq0$ where

\quad$C_{1}^{\prime}=-64\,{a_{{4}}^{5}}{a_{{1}}^{2}}+64\,{a_{{4}}^{4}}a_{{3}%
}a_{{2}}a_{{1}}-16\,{a_{{4}}^{3}}{a_{{3}}^{2}}{a_{{2}}^{2}}+8\,{a_{{4}}%
^{2}{a_{{3}}^{4}}a_{{2}}}-16\,{a_{{4}}^{2}}{a_{{3}}^{3}}a_{{1}}-{a_{{4}}%
}{a_{{3}}^{6}}$.
\end{enumerate}
\end{example}

\begin{remark}
As you see in the above example, in general, the $\mu$-multiplicity
discriminating condition of $F$ is not unique syntactically. In fact, there
are infinitely many syntactically different $\mu$-multiplicity discriminant
conditions. Thus a challenge is to find a syntactically \textquotedblleft
small\textquotedblright\ condition.
\end{remark}

\begin{definition}
[Determinant of polynomials]Let $F_{0},\ldots,F_{k}\in\mathbb{C}\left[
x\right]  $ be such that $\deg F_{0},\ldots,\deg F_{k}\leq k$. Then the
\emph{determinant} \emph{of the polynomials} $\operatorname*{dp}$ is defined
by
\[
\operatorname*{dp}\left[
\begin{array}
[c]{c}%
F_{0}\\
\vdots\\
F_{k}%
\end{array}
\right]  =\det\left[
\begin{array}
[c]{ccc}%
a_{0,k} & \cdots & a_{0,0}\\
\vdots &  & \vdots\\
a_{k,k} & \cdots & a_{k,0}%
\end{array}
\right]
\]
where $F_{i}=\sum\limits_{0\leq j\leq k}a_{i,j}x^{j}$.
\end{definition}

\noindent We have introduced all the necessary notions and notations, and
thus, now we give a precise statement of the main result of this paper.

\begin{theorem}
[Main result]\label{thm:multdiscriminant}
\revise{
Let $\mu=\left(  \mu
_{1},\ldots,\mu_{m}\right)$ be such that $\mu_{1},\ldots,\mu_{m}\geq1$ 
and $\mu_{1}+\cdots+\mu_{m}=n$,
and $F\in\mathbb{C}\left[ x\right]$ be of degree $n$. }
Let
\[
D_{\mu}\left(  F\right)  =\sum_{{\sigma}\in S_{p}}\operatorname*{dp}\left[
\begin{array}
[c]{c}%
x^{n-\mu_{m}-1}F\\
\vdots\\
x^{0}F\\
x^{n-1}F^{\left(  \sigma_{_{1}}\right)  }/\sigma_{1}!\\
\vdots\\
x^{0}F^{\left(  \sigma_{_{n}}\right)  }/\sigma_{n}!
\end{array}
\right]
\]
where $p=(\underset{\mu_{1}}{\underbrace{\mu_{1},\ldots,\mu_{1}}},\ldots,$
$\underset{\mu_{m}}{\underbrace{\mu_{m},\ldots,\mu_{m}}})$, $S_{p}$ is the set
of all permutations of $p$
\revise{and $F^{(i)}$ is the $i$-th derivative of $F$ in terms of $x$}.
Then $D_{\mu}(F)\neq0$ is a $\mu$-multiplicity-discriminanting condition.
\end{theorem}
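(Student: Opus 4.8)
My plan is to follow the three-step reduction advertised in the introduction, working entirely with the roots $r_1,\dots,r_m$ and their multiplicities. Write $F = a_0\prod_{i=1}^m (x-r_i)^{\nu_i}$ for the generic factorization, where $\nu=(\nu_1,\dots,\nu_m)$ is an arbitrary $m$-tuple of positive integers summing to $n$ (not assumed sorted). The key observation is that $D_\mu(F)$, being a polynomial in the coefficients of $F$, becomes a polynomial in $a_0, r_1,\dots,r_m$ after substitution, and I want to show it vanishes identically (as a function of the $r_i$, for fixed distinct values) \emph{unless} the unordered tuple $\nu$ equals $\mu$, and is nonzero when $\nu=\mu$. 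Since $\operatorname{ndr}F=m$ is assumed, $\operatorname{mult}(F)=\mu$ iff $\nu$ (sorted) $=\mu$, so this is exactly what is needed.

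First I would unwind the determinant $\operatorname{dp}$: each row $x^j F^{(\sigma_k)}/\sigma_k!$ is a polynomial of degree $\le n-1$ (after the leading cancellation — note $x^{n-1}F^{(\sigma_1)}$ has degree $n-1+n-\sigma_1$; here I need to recheck the degree bookkeeping, but the construction of $D_\mu$ is arranged so that every row is a polynomial of degree $\le N$ for the appropriate common size $N$, and $\operatorname{dp}$ reads off the coefficient vectors). The central identity to establish is that $\operatorname{dp}$ of a list of polynomials equals, up to a constant, a Vandermonde-type evaluation: for polynomials $G_1,\dots,G_N$ of degree $<N$, one has $\operatorname{dp}[G_1;\dots;G_N] = \frac{1}{V(t_1,\dots,t_N)}\det\big(G_i(t_j)\big)$ when the $G_i$ are expressed suitably — more precisely I would use the fact that $\operatorname{dp}$ is (up to sign) the coefficient of $\det(G_i(t_j))/V(t)$, or directly that evaluating the matrix of coefficients against the Vandermonde matrix of nodes gives $\det(G_i(t_j)) = \operatorname{dp}[G_1;\dots;G_N]\cdot V(t_1,\dots,t_N)$. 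Specializing the nodes $t_j$ to the roots $r_i$ (with appropriate confluent/derivative structure, since the rows $x^jF$ vanish to order $\nu_i$ at $r_i$ while the rows $x^jF^{(\sigma_k)}/\sigma_k!$ detect lower-order behavior), I expect each $\operatorname{dp}$-summand to collapse to a product over $i$ of a local determinant at $r_i$, and the sum over $\sigma\in S_p$ to reorganize into a permanent — this is step 1–2 of the introduction's outline read in reverse.

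Concretely, the cleanest route: show $D_\mu(F) = c\cdot a_0^{?}\cdot \big(\prod_{i<j}(r_i-r_j)^{?}\big)\cdot \operatorname{per}(M)$ where $M$ is an $m\times m$ (or suitably sized) matrix whose entries encode $\binom{\nu_i}{\mu_j}$-type quantities or indicator data comparing the actual multiplicity vector $\nu$ to the target $\mu$, $c$ a nonzero combinatorial constant, and $\operatorname{per}$ the permanent. The permanent $\operatorname{per}(M)$ should be designed so that it is $0$ exactly when no permutation matches $\nu$ to $\mu$ termwise, i.e. when $\nu\ne\mu$ as unordered tuples, and a positive integer when $\nu=\mu$. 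Then $D_\mu(F)\ne 0$ iff $a_0\ne 0$, the $r_i$ are distinct (both guaranteed by $\deg F=n$, $\operatorname{ndr}F=m$), and $\operatorname{per}(M)\ne 0$, which happens iff $\operatorname{mult}(F)=\mu$. The main obstacle will be step 3 of the outline — justifying that the "determinant in roots" equals the "determinant in coefficients" after the substitution, i.e. controlling the confluent Vandermonde specialization cleanly, keeping track of which rows survive, and verifying the degree bounds so that $\operatorname{dp}$ is even well-defined on every row; getting the exact power of the discriminant and the exact constant $c$ right, and proving the permanent is nonzero precisely in the matching case, is where the real work lies. I would handle this by first treating a single root in isolation (a local computation at $r_i$ of a confluent block), then assembling the blocks multiplicatively, then summing over $S_p$ and recognizing the permanent.
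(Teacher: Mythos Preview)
Your outline follows the three-step scheme announced in the introduction, but the concrete objects you aim for are not the ones that actually appear, and the two places where the real work happens are missing.

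First, the target factorization you propose, $D_\mu(F)=c\,a_0^{?}\prod_{i<j}(r_i-r_j)^{?}\cdot\operatorname{per}(M)$ with $M$ an $m\times m$ matrix of ``$\binom{\nu_i}{\mu_j}$-type'' or indicator entries, is not what one obtains. The paper works not with the $m$ distinct roots but with the full list $\alpha_1,\dots,\alpha_n$ of roots \emph{with multiplicity}, and shows that (up to a nonzero constant and a power of $a_0$) $D_\mu(F)$ equals the $n\times n$ permanent $\operatorname{per}\bigl[F^{(p_i)}(\alpha_j)/p_i!\bigr]_{1\le i,j\le n}$. This permanent is a genuine polynomial in the $\alpha_j$, not a discriminant power times a combinatorial $0/1$ permanent; for instance in the running example $\mu=(3,1)$ it equals $-a_0^4\prod(\alpha_i+\alpha_j-\alpha_k-\alpha_l)^2$. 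Your confluent-Vandermonde route with $m$ nodes would force the block structure to depend on the \emph{unknown} multiplicity vector $\nu$, so you would have to carry out a separate computation for each $\nu$ and then argue case-by-case; the paper avoids this entirely by keeping all $n$ nodes.

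Second, you are missing both of the two substantive lemmas. The bridge from ``sum over $S_p$ of $\operatorname{dp}$'s'' to a single permanent is the identity
\[
\sum_{\tau\in S_n}\det\bigl(A\circ(P_\tau B)\bigr)=\det(A)\,\operatorname{per}(B),
\]
applied with $A$ the $n\times n$ Vandermonde in the $\alpha_j$ and $B$ the matrix above; this is what collapses the sum, and nothing in your proposal produces it. More importantly, the reason the resulting $n\times n$ permanent is nonzero \emph{exactly} when $\operatorname{mult}(F)=\mu$ is a short but non-obvious logical argument: one uses $F^{(\sigma_i)}(\alpha_i)\neq 0\iff\operatorname{mult}(\alpha_i)\le\sigma_i$, upgrades $\le$ to $=$ via $\sum_i\operatorname{mult}(\alpha_i)=\sum_i\sigma_i$, and then observes that at most one $\sigma\in S_p$ can satisfy all equalities, so the nonzero term in the permanent expansion cannot be cancelled. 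Your proposal replaces this with the hope that an indicator-matrix permanent will materialize; it does not, and without this argument there is no proof that $D_\mu(F)\neq 0\Leftrightarrow\operatorname{mult}(F)=\mu$.
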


\begin{remark}
\label{rem:degree_of_D}
\revise{Assume $F=\sum_{i=0}^na_ix^i$. Then a straightforward
degree analysis of the expression of $D_{\mu}$ in Theorem
\ref{thm:multdiscriminant} shows that the degree of $D_{\mu}(F)$ in $a$ is
$2\,n-\mu_{m}$ where $a=(a_0,a_1,\ldots,a_n)$.}
\end{remark}

\begin{example}
\label{ex:quartic-poly} Let $F(x)=a_{4}x^{4}+a_{3}x^{3}+a_{2}x^{2}%
+a_{1}x+a_{0}$ be such that
\revise{$\deg F=4$ and the number of distinct roots of $F$ is
$2$}. We will construct a $\left(  3,1\right)  $-multiplicity discriminating
condition on $F$, using the main result (Theorem \ref{thm:multdiscriminant}). Note

\begin{enumerate}
\item $\mu=(3,1)$

\item $p=(3,3,3,1)$

\item $S_{p}=\{(3,3,3,1),(3,3,1,3),(3,1,3,3),(1,3,3,3)\}$

\item
\begin{align*}
D_{(3,1)}(F)=  &  \,\operatorname*{dp}\left[
\begin{array}
[c]{c}%
x^{2}F\\[5pt]%
x^{1}F\\[5pt]%
x^{0}F\\[5pt]%
x^{3}F^{(3)}/3!\\[5pt]%
x^{2}F^{(3)}/3!\\[5pt]%
x^{1}F^{(3)}/3!\\[5pt]%
x^{0}F^{(1)}/1!
\end{array}
\right]  +\operatorname*{dp}\left[
\begin{array}
[c]{c}%
x^{2}F\\[5pt]%
x^{1}F\\[5pt]%
x^{0}F\\[5pt]%
x^{3}F^{(3)}/3!\\[5pt]%
x^{2}F^{(3)}/3!\\[5pt]%
x^{1}F^{(1)}/1!\\[5pt]%
x^{0}F^{(3)}/3!
\end{array}
\right]  +\operatorname*{dp}\left[
\begin{array}
[c]{c}%
x^{2}F\\[5pt]%
x^{1}F\\[5pt]%
x^{0}F\\[5pt]%
x^{3}F^{(3)}/3!\\[5pt]%
x^{2}F^{(1)}/1!\\[5pt]%
x^{1}F^{(3)}/3!\\[5pt]%
x^{0}F^{(3)}/3!
\end{array}
\right]  +\operatorname*{dp}\left[
\begin{array}
[c]{c}%
x^{2}F\\[5pt]%
x^{1}F\\[5pt]%
x^{0}F\\[5pt]%
x^{3}F^{(1)}/1!\\[5pt]%
x^{2}F^{(3)}/3!\\[5pt]%
x^{1}F^{(3)}/3!\\[5pt]%
x^{0}F^{(3)}/3!
\end{array}
\right] \\[5pt]
=  &  \,\det\left[
\begin{array}
[c]{ccccccc}%
a_{4} & a_{3} & a_{2} & a_{1} & a_{0} &  & \\
& a_{4} & a_{3} & a_{2} & a_{1} & a_{0} & \\
&  & a_{4} & a_{3} & a_{2} & a_{1} & a_{0}\\
&  & 4a_{4} & a_{3} &  &  & \\
&  &  & 4a_{4} & a_{3} &  & \\
&  &  &  & 4a_{4} & a_{3} & \\
&  &  & 4a_{4} & 3\,a_{3} & 2\,a_{2} & a_{1}%
\end{array}
\right]  +\det\left[
\begin{array}
[c]{ccccccc}%
a_{4} & a_{3} & a_{2} & a_{1} & a_{0} &  & \\
& a_{4} & a_{3} & a_{2} & a_{1} & a_{0} & \\
&  & a_{4} & a_{3} & a_{2} & a_{1} & a_{0}\\
&  & 4a_{4} & a_{3} &  &  & \\
&  &  & 4a_{4} & a_{3} &  & \\
&  & 4a_{4} & 3\,a_{3} & 2\,a_{2} & a_{1} & \\
&  &  &  &  & 4a_{4} & a_{3}%
\end{array}
\right]  +\\
&  \,\det\left[
\begin{array}
[c]{ccccccc}%
a_{4} & a_{3} & a_{2} & a_{1} & a_{0} &  & \\
& a_{4} & a_{3} & a_{2} & a_{1} & a_{0} & \\
&  & a_{4} & a_{3} & a_{2} & a_{1} & a_{0}\\
&  & 4a_{4} & a_{3} &  &  & \\
& 4a_{4} & 3\,a_{3} & 2\,a_{2} & a_{1} &  & \\
&  &  &  & 4a_{4} & a_{3} & \\
&  &  &  &  & 4a_{4} & a_{3}%
\end{array}
\right]  +\det\left[
\begin{array}
[c]{ccccccc}%
a_{4} & a_{3} & a_{2} & a_{1} & a_{0} &  & \\
& a_{4} & a_{3} & a_{2} & a_{1} & a_{0} & \\
&  & a_{4} & a_{3} & a_{2} & a_{1} & a_{0}\\
4a_{4} & 3\,a_{3} & 2\,a_{2} & a_{1} &  &  & \\
&  &  & 4a_{4} & a_{3} &  & \\
&  &  &  & 4a_{4} & a_{3} & \\
&  &  &  &  & 4a_{4} & a_{3}%
\end{array}
\right] \\
=  &  \,-64\,{a_{{4}}^{5}}{a_{{1}}^{2}}+64\,{a_{{4}}^{4}}a_{{3}}a_{{2}}a_{{1}%
}-16\,{a_{{4}}^{3}}{a_{{3}}^{2}}{a_{{2}}^{2}}+8\,{a_{{4}}^{2}{a_{{3}}^{4}%
}a_{{2}}}-16\,{a_{{4}}^{2}}{a_{{3}}^{3}}a_{{1}}-{a_{{4}}}{a_{{3}}^{6}}%
\end{align*}

Note that it is the polynomial $C_{1}^{\prime}$ in Example \ref{ex:running}.

\item The main result (Theorem \ref{thm:multdiscriminant}) states that
$D_{(3,1)}(F)\neq0$ is a $\left(  3,1\right)  $-multiplicity discriminating
condition on $F$.
\end{enumerate}
\end{example}

\begin{remark}
Observe $D_{(1,\ldots,1)}(F)$ is the Sylvester resultant of $F$ and
$F^{\prime}$. Thus $D_{\mu}$ can be viewed as a certain generalization of
Sylvester resultant of $F$ and $F^{\prime}$ \revise{(i.e., the traditional discriminant of $F$ up to sign)}.
\end{remark}

\section{Proof of Main Result (Theorem \ref{thm:multdiscriminant})}

\label{sec:proof} Let $F\in\mathbb{C}[x]$ be of degree $n$ and $\alpha
_{1},\ldots,\alpha_{n}$ be the $n$ roots of $F$. In this section, we give a
proof of Theorem \ref{thm:multdiscriminant}. The proof is long thus we divide
the proof into three steps (lemmas), which are interesting on their own.

\begin{enumerate}
\item Lemma \ref{lem:expr-in-roots}: We show that the multiplicity condition
in roots can be converted into an equivalent polynomial inequation which is a
permanental expression in roots.

\item Lemma \ref{lem:gen-D-ratio}: We show that the permanent in roots can be
converted into a sum of determinants in roots.

\item Lemma \ref{lem:ratio}: We show that each determinant in roots can be
converted into a determinant in coefficients.
\end{enumerate}

\noindent Finally, we combine the above three lemmas to prove the main result.

\subsection{From a condition in roots to a permanental condition in roots}

\begin{lemma}
\label{lem:expr-in-roots}
\revise{Let $F$ be of degree $n$ and $\alpha_1,\ldots,\alpha_n$ be its $n$ complex roots. Then}
\[
\operatorname*{mult}\left(  F\right)  =\mu\ \ \Longleftrightarrow
\ \ \overline{D}_{\mu}\left(  F\right)  \neq0
\]
where

\begin{itemize}
\item $\overline{D}_{\mu}\left(  F\right)  =\operatorname*{per}\left[
\begin{array}
[c]{ccc}%
\dfrac{F^{(p_{1})}(\alpha_{1})}{p_{1}!} & \cdots & \dfrac{F^{(p_{1})}%
(\alpha_{n})}{p_{1}!}\\[8pt]%
\vdots &  & \vdots\\[8pt]%
\dfrac{F^{(p_{n})}(\alpha_{1})}{p_{n}!} & \cdots & \dfrac{F^{(p_{n})}%
(\alpha_{n})}{p_{n}!}%
\end{array}
\right]  \ /\ c; $

\item \revise{$\operatorname*{per}$ stands for the permanent operation;}

\item $p=(\underset{\mu_{1}}{\underbrace{\mu_{1},\ldots,\mu_{1}}},\ldots,$
$\underset{\mu_{m}}{\underbrace{\mu_{m},\ldots,\mu_{m}}})$;

\item $c$ is an integer determined by $p$; specifically if $p$ consists of
$\ell$ distinct numbers occurring $q_{1},\ldots,q_{\ell}$ times, then
$c=\prod_{i=1}^{\ell}q_{i}!$.
\end{itemize}
\end{lemma}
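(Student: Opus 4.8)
The plan is to analyze the permanent $\operatorname*{per}\bigl[\tfrac{F^{(p_i)}(\alpha_j)}{p_i!}\bigr]$ directly by exploiting the factored form of $F$. Write $F = a_0\prod_{k=1}^m (x-r_k)^{\nu_k}$ where $r_1,\ldots,r_m$ are the distinct roots and $\nu = \operatorname*{mult}(F)$ is their (unknown) multiplicity vector; the list $\alpha_1,\ldots,\alpha_n$ is obtained by repeating each $r_k$ exactly $\nu_k$ times. The key elementary observation is the behavior of the normalized derivatives at a root: if $r$ is a root of $F$ of multiplicity exactly $\nu$, then $F^{(t)}(r)/t! = 0$ for all $t < \nu$, while $F^{(\nu)}(r)/\nu! = a_0\prod_{k:\,r_k\neq r}(r-r_k)^{\nu_k} \neq 0$, and for $t>\nu$ the value $F^{(t)}(r)/t!$ is some (generally nonzero) number. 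First I would record this vanishing/nonvanishing dictionary precisely.

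Next I would examine a single column of the matrix in $\overline{D}_\mu(F)$. The entries of the matrix are indexed by rows $i=1,\ldots,n$ (with associated derivative order $p_i$) and columns $j=1,\ldots,n$ (with associated root $\alpha_j$, i.e. some $r_k$). Group the columns by which distinct root they correspond to: there is a block of $\nu_k$ identical columns for each $r_k$. The $i$-th entry of the $r_k$-column is $F^{(p_i)}(r_k)/p_i!$, which is zero whenever $p_i < \nu_k$. Since in $p$ the value $\mu_k$ occurs exactly $\mu_k$ times, the number of rows $i$ with $p_i < \nu_k$ is $\sum_{\ell:\,\mu_\ell<\nu_k}\mu_\ell$. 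The heart of the argument is a counting/pigeonhole step: I claim that $\operatorname*{per}[\cdots]\neq 0$ forces $\nu$ to be a rearrangement of $\mu$, and conversely when $\nu=\mu$ the permanent is a single nonzero product (up to the multiplicity factor $c$). Concretely, in the permanent expansion $\sum_{\tau\in S_n}\prod_i (\text{entry})_{i,\tau(i)}$, a term survives only if for every $i$ the root assigned to row $i$ has multiplicity $\le p_i$. Counting rows whose $p_i$ equals a given value against columns whose root-multiplicity is at most that value, and using that $\mu$ is a partition (sorted), one shows the only way to have any surviving term is $\nu = \mu$ (as multisets); moreover in that case every surviving term equals $\prod_k \bigl(F^{(\mu_k)}(r_k)/\mu_k!\bigr)^{\mu_k} = a_0^n\prod_{k\neq \ell}(r_k-r_\ell)^{\mu_k\mu_\ell}\neq 0$, and the number of surviving terms is exactly $\prod_i$(row-multiplicities)$\times\prod_k$(column-multiplicities)$/\cdots$, which I would check equals $c$ — so $\overline{D}_\mu(F)$ is the nonzero Vandermonde-type product and in particular nonzero. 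For the converse direction, if $\nu\neq\mu$ then by the counting argument every term in the permanent expansion contains a zero factor, so the permanent vanishes.

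The main obstacle I anticipate is making the counting argument airtight in the direction ``$\operatorname*{per}\neq 0 \Rightarrow \nu=\mu$.'' One must argue that no clever pairing of the $\nu$-blocks of columns with the rows can avoid a zero entry unless the multiset $\{\nu_k\}$ coincides with $\{\mu_k\}$; this is a Hall's-theorem / majorization style argument where one compares, for each threshold $t$, the number of rows with $p_i\le t$ (namely $\sum_{\mu_\ell\le t}\mu_\ell$) against the number of columns whose root has multiplicity $\le t$ (namely $\sum_{\nu_k\le t}\nu_k$), and uses that both lists sum to $n$ together with the sortedness of $\mu$ to pin down $\nu$. The secondary technical point is the exact evaluation of the surviving-term count and its identification with $c = \prod_{i=1}^\ell q_i!$; this is a bookkeeping computation about how many bijections respect the block structure, and I expect it to fall out cleanly once the block sizes on the row side ($\mu_k$ copies of value $\mu_k$) and column side ($\nu_k=\mu_k$ copies of root $r_k$) are matched up.
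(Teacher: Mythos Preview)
Your approach is essentially the paper's: both reduce to the observation that a term $\prod_i F^{(\sigma_i)}(\alpha_i)$ in the permanent expansion survives exactly when $\operatorname{mult}(\alpha_i)\le\sigma_i$ for all $i$, and both hinge on the combinatorial claim that such a $\sigma\in S_p$ exists iff $\nu=\mu$. Your plan to evaluate $\overline{D}_\mu(F)$ explicitly as $a_0^{\,n}\prod_{k\ne\ell}(r_k-r_\ell)^{\mu_k\mu_\ell}$ and to count the surviving bijections as exactly $c$ goes a bit beyond the paper's terse chain of equivalences; for the majorization step just be sure to invoke that $\nu$ has exactly $m$ parts (not merely $\sum\nu_k=n$), as that hypothesis is what actually pins down $\nu=\mu$.
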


\begin{proof}
\revise{We will prove the lemma by ``deriving'' the
condition$\ \overline{D}_{\mu}\left(  F\right)  \neq0$ instead of merely
``verifying'' the correctness of the lemma, since it will be much more interesting
to read, bringing out the underlying ideas and intuitions. The derivation will
be driven by two wishes:
\begin{enumerate}
\item Wish to find a condition that involves a \emph{single} polynomial on the all the (not necessarily distinct) roots $\alpha_1,\ldots,\alpha_n$.
\item Wish to find the polynomial which is \emph{symmetric} in the roots,
so that later we can turn it into an expression in the coefficients.
\end{enumerate}
\noindent
The strategy is to repeatedly rewrite the condition $\operatorname*{mult}%
\left(  F\right)  =\mu$ with the above two wishes in mind.
}
\begin{enumerate}
\item \revise{Note that the condition $\operatorname*{mult}\left(  F\right)  =\mu\ $is written in
terms of \emph{distinct} roots. We rewrite the condition into a {\em symmetric\/} condition on   {\em all\/} the (not necessarily distinct)\ roots  $\alpha_1,\ldots,\alpha_n$.
\[
\operatorname*{mult}\left(  F\right)  =\mu\ \ \Longleftrightarrow
\ \ \bigvee\limits_{\sigma\in S_{p}}\bigwedge\limits_{i=1}^{n}%
\operatorname*{mult}\left(  \alpha_{i}\right)  =\sigma_{i}%
\]
Proof: Obvious from the definition of multiplicity of a root.
}

\item \revise{We rewrite the symmetric condition into a symmetric {\em polynomial\/} condition
\[
\bigvee\limits_{\sigma\in S_{p}}\ \bigwedge\limits_{i=1}^{n}%
\operatorname*{mult}\left(  \alpha_{i}\right)  =\sigma_{i}%
\ \ \ \ \Longleftrightarrow\ \ \bigvee\limits_{\sigma\in S_{p}}\ \ \bigwedge
\limits_{i=1}^{n}F^{\left(  \sigma_{i}\right)  }\left(  \alpha_{i}\right)
\neq0
\]
}

\revise{Proof: We will show each direction of $\Longleftrightarrow$ one by one.}

\begin{enumerate}
\item \revise{$\bigvee\limits_{\sigma\in S_{p}}\bigwedge\limits_{i=1}^{n}%
\operatorname*{mult}\left(  \alpha_{i}\right)  =\sigma_{i}\ \Longrightarrow
\ \ \bigvee\limits_{\sigma\in S_{p}}\bigwedge\limits_{i=1}^{n}F^{\left(
\sigma_{i}\right)  }\left(  \alpha_{i}\right)  \neq0$}

\revise{From the elementary  calculus, we have
\[
\operatorname*{mult}\left(  \alpha_{i}\right)  =\sigma_{i}\ \Longrightarrow
\ F^{\left(  \sigma_{i}\right)  }\left(  \alpha_{i}\right)  \neq0.
\]
Thus the direction follows immediately.}

\item \revise{$\bigvee\limits_{\sigma\in S_{p}}\ \bigwedge\limits_{i=1}^{n}F^{\left(
\sigma_{i}\right)  }\left(  \alpha_{i}\right)  \neq0\ \ \Longrightarrow
\ \ \bigvee\limits_{\sigma\in S_{p}}\bigwedge\limits_{i=1}^{n}%
\operatorname*{mult}\left(  \alpha_{i}\right)  =\sigma_{i}\ \ $}

\revise{It is immediate from the following two sub-claims.}
\begin{enumerate}
\item \revise{$\bigwedge\limits_{i=1}^{n}F^{\left(  \sigma_{i}\right)  }\left(
\alpha_{i}\right)  \neq0$ $\ \Longrightarrow\ \bigwedge\limits_{i=1}%
^{n}\operatorname*{mult}\left(  \alpha_{i}\right)  \leq\sigma_{i}$}

\revise{It is immediate from elementary calculus.}

\item $\bigwedge\limits_{i=1}^{n}\operatorname*{mult}\left(  \alpha
_{i}\right)  \leq\sigma_{i}\Longrightarrow\ \ \bigwedge\limits_{i=1}%
^{n}\operatorname*{mult}\alpha_{i}=\sigma_{i}$

\revise{It is immediate from 
$\underset{\sigma\in Sp}{\ \forall}$ $\sum\limits_{i=1}^{n}%
\operatorname*{mult}\left(  \alpha_{i}\right)  =\sum\limits_{i=1}^{n}%
\sigma_{i}$, which is again obvious from the definitions of $\operatorname*{mult}$ and $S_{p}$.}
\end{enumerate}
\end{enumerate}

\item 
\revise{We rewrite the condition so that a  fewer polynomials are involved.
\[ \bigvee\limits_{\sigma\in S_{p}}\ \bigwedge\limits_{i=1}%
^{n}F^{\left(  \sigma_{i}\right)  }\left(  \alpha_{i}\right)  \neq
0\ \   \Longleftrightarrow\ \ \bigvee\limits_{\sigma\in S_{p}}%
\ \ \prod\limits_{i=1}^{n}F^{\left(  \sigma_{i}\right)  }\left(  \alpha
_{i}\right)  \neq0\]}

\revise{Proof: Obvious.}

\item \revise{We rewrite the condition so that only one polynomial is involved. 
\[ \bigvee\limits_{\sigma\in S_{p}}\ \ \prod\limits_{i=1}^{n}F^{\left(
\sigma_{i}\right)  }\left(  \alpha_{i}\right)  \neq0\ \ \Longleftrightarrow
\sum\limits_{\sigma\in S_{p}}\prod\limits_{i=1}^{n}F^{\left(  \sigma
_{i}\right)  }\left(  \alpha_{i}\right)  \ \neq0\]}

\revise{Proof: We prove the implication for both directions.}

\begin{enumerate}
\item \revise{$\ \bigvee\limits_{\sigma\in S_{p}}\ \ \prod\limits_{i=1}^{n}F^{\left(
\sigma_{i}\right)  }\left(  \alpha_{i}\right)  \neq0\ \ \Longrightarrow
\ \ \sum\limits_{\sigma\in S_{p}}\prod\limits_{i=1}^{n}F^{\left(  \sigma
_{i}\right)  }\left(  \alpha_{i}\right)  \ \neq0$}

\revise{Immediate from the fact  that if $\bigwedge\limits_{i=1}^{n}\operatorname*{mult}\left(
\alpha_{i}\right)  =\sigma_{i}$  then
$\underset{\pi\in S_p, \pi\neq \sigma}{\forall }~\prod_{i=1}%
^{n}F^{\left(  \pi_{i}\right)  }\left(  \alpha_{i}\right)  =0.$}

\item \revise{$\sum\limits_{\sigma\in S_{p}}\prod\limits_{i=1}^{n}F^{\left(
\sigma_{i}\right)  }\left(  \alpha_{i}\right)  \ \neq0\ \ \Longrightarrow
\ \ \bigvee\limits_{\sigma\in S_{p}}\ \ \prod\limits_{i=1}^{n}F^{\left(
\sigma_{i}\right)  }\left(  \alpha_{i}\right)  \neq0$.}

\revise{Obvious.}
\end{enumerate}

\revise{Now we have arrived at our goal by deriving a single symmetric polynomial
condition from the multiplicity condition given at the beginning.}

\revise{We will carry out a few ``cosmetic'' rewritings: (1) remove some redundancies and (2)\ write the condition more
compactly by recalling permanent.}

\item \revise{
We remove some redundancies in  the coefficients of $F^{\left(  \sigma
_{i}\right)  }\left(  \alpha_{i}\right)$. 
\[
\ \sum\limits_{\sigma\in S_{p}}\prod\limits_{i=1}^{n}F^{\left(  \sigma
_{i}\right)  }\left(  \alpha_{i}\right)  \ \neq0\Longleftrightarrow
\sum\limits_{\sigma\in S_{p}}\prod\limits_{i=1}^{n}F^{\left(  \sigma
_{i}\right)  }\left(  \alpha_{i}\right)  \ /\sigma_{i}!\neq0
\]
Proof: Obvious.}

\item \revise{We rewrite the condition more
compactly by recalling permanent.}

\revise{
Consider the following permanent:%
\[
P := \operatorname*{per}\left[
\begin{array}
[c]{ccc}%
\dfrac{F^{(p_{1})}(\alpha_{1})}{p_{1}!} & \cdots & \dfrac{F^{(p_{1})}%
(\alpha_{n})}{p_{1}!}\\[8pt]%
\vdots &  & \vdots\\[8pt]%
\dfrac{F^{(p_{n})}(\alpha_{1})}{p_{n}!} & \cdots & \dfrac{F^{(p_{n})}%
(\alpha_{n})}{p_{n}!}%
\end{array}
\right]
\]
Expanding $P$, we get
$$P=\sum_{\pi\in S_n}\prod_{i=1}^nF^{(p_{\pi(i)})}(\alpha_{i})\big/p_{\pi(i)}!$$}

\revise{
Since, for $\sigma\in S_p$, there are
$c=\prod_{i=1}^{\ell}q_{i}!$ distinct permutations $\pi$'s in $S_n$ such that $\pi(p)=\sigma$ where $q_{1},\ldots,q_{\ell}$ are
the occurrences of distinct numbers in $p$, we have
\[\sum_{\sigma\in S_p}\prod_{i=1}^nF^{(\sigma_i)}(\alpha_{i})\big/\sigma_i!=\operatorname*{per}\left[
\begin{array}
[c]{ccc}%
\dfrac{F^{(p_{1})}(\alpha_{1})}{p_{1}!} & \cdots & \dfrac{F^{(p_{1})}%
(\alpha_{n})}{p_{1}!}\\[8pt]%
\vdots &  & \vdots\\[8pt]%
\dfrac{F^{(p_{n})}(\alpha_{1})}{p_{n}!} & \cdots & \dfrac{F^{(p_{n})}%
(\alpha_{n})}{p_{n}!}%
\end{array}
\right]/c\]
}
\item \revise{
By  denoting  $\operatorname*{per}%
\left[
\begin{array}
[c]{ccc}%
\dfrac{F^{(p_{1})}(\alpha_{1})}{p_{1}!} & \cdots & \dfrac{F^{(p_{1})}%
(\alpha_{n})}{p_{1}!}\\[8pt]%
\vdots &  & \vdots\\[8pt]%
\dfrac{F^{(p_{n})}(\alpha_{1})}{p_{n}!} & \cdots & \dfrac{F^{(p_{n})}%
(\alpha_{n})}{p_{n}!}%
\end{array}
\right]  /c$ by $\overline{D}_{\mu}\left(  F\right.)$, we finally have
\[
\operatorname*{mult}\left(  F\right)  =\mu\ \ \Longleftrightarrow
\ \ \overline{D}_{\mu}\left(  F\right)  \neq0
\]
}
\end{enumerate}
\end{proof}

\begin{example}
\label{ex:multdisr} Let
\revise{$F=(x-\alpha_1)(x-\alpha_2)(x-\alpha_3)(x-\alpha_4)$ and}
$\mu=(3,1)$. Then $p=(3,3,3,1)$ and $c= 3!\cdot1!$. Thus
\begin{align*}
\overline{D}_{\mu}(F) =  &  \,\operatorname*{per}\left[
\begin{array}
[c]{cccc}%
\dfrac{F^{(3)}(\alpha_{1})}{3!} & \dfrac{F^{(3)}(\alpha_{2})}{3!} &
\dfrac{F^{(3)}(\alpha_{3})}{3!} & \dfrac{F^{(3)}(\alpha_{4})}{3!}\\[8pt]%
\dfrac{F^{(3)}(\alpha_{1})}{3!} & \dfrac{F^{(3)}(\alpha_{2})}{3!} &
\dfrac{F^{(3)}(\alpha_{3})}{3!} & \dfrac{F^{(3)}(\alpha_{4})}{3!}\\[8pt]%
\dfrac{F^{(3)}(\alpha_{1})}{3!} & \dfrac{F^{(3)}(\alpha_{2})}{3!} &
\dfrac{F^{(3)}(\alpha_{3})}{3!} & \dfrac{F^{(3)}(\alpha_{4})}{3!}\\[8pt]%
\dfrac{F^{(1)}(\alpha_{1})}{1!} & \dfrac{F^{(1)}(\alpha_{2})}{1!} &
\dfrac{F^{(1)}(\alpha_{3})}{1!} & \dfrac{F^{(1)}(\alpha_{4})}{1!}%
\end{array}
\right]  /(3!\cdot1!)\\
=  &  \,{\operatorname*{per}\left[
\revise{\begin{array} [c]{cccc}\sum\limits_{\substack{1\le i\le 4\\i\ne1}}(\alpha_{1}-\alpha_{i}) & \sum\limits_{\substack{1\le i\le 4\\i\ne2}}(\alpha_{2}-\alpha_{i}) & \sum\limits_{\substack{1\le i\le 4\\i\ne3}}(\alpha_{3}-\alpha_{i}) & \sum\limits_{\substack{1\le i\le 4\\i\ne4}}(\alpha_{4}-\alpha_{i})\\[8pt] \sum\limits_{\substack{1\le i\le 4\\i\ne1}}(\alpha_{1}-\alpha_{i}) & \sum\limits_{\substack{1\le i\le 4\\i\ne2}}(\alpha_{2}-\alpha_{i}) & \sum\limits_{\substack{1\le i\le 4\\i\ne3}}(\alpha_{3}-\alpha_{i}) & \sum\limits_{\substack{1\le i\le 4\\i\ne4}}(\alpha_{4}-\alpha_{i})\\[8pt] \sum\limits_{\substack{1\le i\le 4\\i\ne1}}(\alpha_{1}-\alpha_{i}) & \sum\limits_{\substack{1\le i\le 4\\i\ne2}}(\alpha_{2}-\alpha_{i}) & \sum\limits_{\substack{1\le i\le 4\\i\ne3}}(\alpha_{3}-\alpha_{i}) & \sum\limits_{\substack{1\le i\le 4\\i\ne4}}(\alpha_{4}-\alpha_{i})\\[8pt]\prod\limits_{\substack{1\le i\le 4\\i\ne1}}(\alpha_{1}-\alpha_{i}) & \prod\limits_{\substack{1\le i\le 4\\i\ne2}}(\alpha_{2}-\alpha_{i}) & \prod\limits_{\substack{1\le i\le 4\\i\ne3}}(\alpha_{3}-\alpha_{i}) & \prod\limits_{\substack{1\le i\le 4\\i\ne4}}(\alpha_{4}-\alpha_{i}) \end{array}}
\right]  }/(3!\cdot1!)\\
=  &  \,-{}(\alpha_{1}+\alpha_{2}-\alpha_{3}-\alpha_{4})^{2}(\alpha_{1}%
+\alpha_{3}-\alpha_{2}-\alpha_{4})^{2}(\alpha_{1}+\alpha_{4}-\alpha_{2}%
-\alpha_{3})^{2}.
\end{align*}
If we know that $F$ has two distinct roots, then $\overline{D}_{\mu}(F)\ne0$
if and only if $\mu=(3,1)$.
\end{example}

\begin{remark}
\label{rm:sylvester}
\revise{Suppose $F=a_n\prod_{i=1}^n(x-\alpha_i)$.} Let $\mu
=(1,\ldots,1)$. Then
\[
\overline{D}_{\mu}(F)=\prod_{i=1}^{n}F^{\prime}(\alpha_{i})={a_{{n}}^{n-1}%
}\prod_{i\neq j}\left(  \alpha_{i}-\alpha_{j}\right)
\]
which is the well known discriminant up to sign. Thus $\overline{D}_{\mu}(F)$
in Lemma \ref{lem:expr-in-roots} can be viewed as a certain generalization of discriminant.
\end{remark}

\subsection{From a permanent in roots to a sum of determinants in roots}

The results presented in this subsection and the next subsection are more
general than what are needed for proving the main result (Theorem
\ref{thm:multdiscriminant}). We present the more general results in the hope
that they would be useful for some other related problems. The following lemma
shows that one can rewrite a permanent in terms of determinants.

\begin{lemma}
\label{lem:gen-D-ratio}
\revise{Let $A$ and $B$ be square matrices of size $n$.} We have
\[
\operatorname*{per}\left(  A\right)  =\frac{1}{\det\left(  B\right)  }%
\sum_{\tau\in S_{n}}\det\left(  \left(  P_{\tau}A\right)  \circ B\right)
\]
where

\begin{itemize}
\item
\revise{The notation $\circ$ stands for the entry-wise (Hadamard) product; in other words,
the $(i,j)$-th entry of $A\circ B$ is the product of the $(i,j)$-th entries of $A$ and $B$;}

\item \revise{The notation $P_{\tau}B$ stands for the matrix obtained by permuting the rows of $B$ as indicated by $\tau$.}
\end{itemize}
\end{lemma}

\begin{proof}
\revise{We will rewrite a permanent in terms of determinants as follows.}

\begin{enumerate}
\item \revise{Recalling the definition of permanent, we have%
\[
\operatorname*{per}\left(  A\right)  =\sum_{\tau\in S_{n}}\ \prod_{j=1}%
^{n}a_{\tau\left(  j\right)  ,j}\
\]
}

\item \revise{
Now we make a simple, but crucial rewriting of the above expression into
the following%
\[
\operatorname*{per}\left(  A\right)  =\sum_{\tau\in S_{n}}\ \prod_{j=1}%
^{n}a_{\left(  \tau\circ\pi\right)  \left(  j\right)  ,j}%
\]
for arbitrary $\pi\in S_{n}$. Note that $\tau\ $is replaced with
$\tau\circ\pi$.  This is correct because  $\tau\circ\pi$ also ranges over~$S_{n}$. Why we make the above rewriting will be made clear in the following steps.}

\item \revise{
Recalling the definition of determinant, we have%
\[
\operatorname*{per}\left(  A\right)  \det\left(  B\right)  =\left(  \sum
_{\tau\in S_{n}}\ \prod_{j=1}^{n}a_{\tau\left(  \pi\left(  j\right)  \right)
,j}\right)  \left(  \sum_{\pi\in S_{n}}\ \operatorname*{sgn}\left(
\pi\right)  \prod_{j=1}^{n}b_{\pi\left(  j\right)  ,j}\right)
\]
}

\item \revise{Rearranging the sums and the products, we have%
\[
\operatorname*{per}\left(  A\right)  \det\left(  B\right)  =\sum_{\tau\in
S_{n}}\sum_{\pi\in S_{n}}\ \operatorname*{sgn}\left(  \pi\right)  \prod
_{j=1}^{n}\left(  a_{\tau\left(  \pi\left(  j\right)  \right)  ,j}%
b_{\pi\left(  j\right)  ,j}\right)
\]
}

\item \revise{Writing in terms of determinants and Hadamard product, we  have
\[
\operatorname*{per}\left(  A\right)  \det\left(  B\right)  =\sum_{\tau\in
S_{n}}\det\left(  \left(  P_{\tau}A\right)  \circ B\right)
\]
}

\item \revise{
Finally we have%
\[
\operatorname*{per}\left(  A\right)  =\frac{1}{\det\left(  B\right)  }%
\sum_{\tau\in S_{n}}\det\left(  \left(  P_{\tau}A\right)  \circ B\right)
\]}
\end{enumerate}
\end{proof}

\begin{example}
When $n=2$, we have $S_{2}=\{(1),(12)\}$. Let $A=\left[
\begin{array}
[c]{cc}%
a_{11} & a_{12}\\
a_{21} & a_{22}%
\end{array}
\right]$ and $B=\left[
\begin{array}
[c]{cc}%
b_{11} & b_{12}\\
b_{21} & b_{22}%
\end{array}
\right]$. Then \revise{
we have
\begin{align*}
&\operatorname*{per}\left(A\right)=a_{11}a_{22}+a_{12}a_{21},\qquad
\det\left(B\right)=b_{11}b_{22}-b_{12}b_{21}\\[10pt]
&\sum_{\tau\in S_{2}}\det\left(  \left(  P_{\tau}A\right)  \circ B\right)\\
=&\det\left(P_{(1)}A\circ B\right)+\det\left(P_{(12)}A\circ B\right) \\
=&\det\left[
\begin{array}
[c]{cc}%
a_{11}b_{11} & a_{12}b_{12}\\
a_{21}b_{21} & a_{22}b_{22}%
\end{array}
\right]  +\det\left[
\begin{array}
[c]{cc}%
a_{21}b_{11} & a_{22}b_{12}\\
a_{11}b_{21} & a_{12}b_{22}%
\end{array}
\right]  \\
=&(a_{11}a_{22}b_{11}b_{22}-a_{12}a_{21}b_{12}b_{21})+(%
a_{21}a_{12}b_{11}b_{22}-a_{11}a_{22}b_{12}b_{21})\\
=&(a_{11}a_{22}+a_{12}a_{21})(b_{11}b_{22}-b_{12}b_{21})\\
=&\operatorname*{per}\left(A\right)\det\left(B\right)
\end{align*}
Thus, 
$$\operatorname*{per}\left(A\right)=\frac{1}{\det\left(  B\right)  }
\sum_{\tau\in S_{2}}\det\left(  \left(  P_{\tau}A\right)  \circ B\right)$$}
\end{example}

\subsection{From a determinant in roots to a determinant in coefficients}

{\revise{Let $\mathbb{C}\left[  x\right]  _{k}$ stand for  the set of all  polynomials in $\mathbb{C}\left[  x\right]$ with degree at most $k$.}
 
\begin{lemma}
\label{lem:ratio} Let $\omega_{0},\ldots,\omega_{k}$ be a
\revise{canonical} basis of $\mathbb{C}\left[  x\right]_{k}$ for every $k\geq0$.
\revise{Let $F=a_n(x-\alpha_1)\cdots(x-\alpha_n)$ and $G_{1},\ldots
,G_{n}\in\mathbb{C}\left[  x\right]_{2n-2}$.} Then we have%
\[
\operatorname*{dp}\left[
\begin{array}
[c]{c}%
\omega_{n-2}F\\
\vdots\\
\omega_{0}F\\
G_{1}\\
\vdots\\
G_{n}%
\end{array}
\right]  =\dfrac{a_{n}^{n-1}\cdot\det\left[
\begin{array}
[c]{ccc}%
G_{1}(\alpha_{1}) & \cdots & G_{1}(\alpha_{n})\\
\vdots &  & \vdots\\
G_{n}(\alpha_{1}) & \cdots & G_{n}\left(  \alpha_{n}\right)
\end{array}
\right]  }{\det\left[
\begin{array}
[c]{ccc}%
\omega_{n-1}\left(  \alpha_{1}\right)  & \cdots & \omega_{n-1}\left(
\alpha_{n}\right) \\
\vdots &  & \vdots\\
\omega_{0}\left(  \alpha_{1}\right)  & \cdots & \omega_{0}\left(  \alpha
_{n}\right)
\end{array}
\right]  }%
\]

\end{lemma}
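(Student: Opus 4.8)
The plan is to prove the identity first under the extra hypothesis that $\alpha_1,\ldots,\alpha_n$ are pairwise distinct, and then to observe that this is enough. After clearing the denominator, both sides become polynomials in $a_0$, in $\alpha_1,\ldots,\alpha_n$, and in the coefficients of $G_1,\ldots,G_n$; and the denominator $\det[\omega_{n-i}(\alpha_j)]_{1\le i,j\le n}$ is, after the obvious row reduction exploiting $\deg\omega_k=k$, a nonzero scalar multiple of the Vandermonde determinant $\prod_{i<j}(\alpha_i-\alpha_j)$, hence is not the zero polynomial. So it suffices to check the cleared polynomial identity on the Zariski-dense locus where the $\alpha_i$ are distinct.

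Assume then that $\alpha_1,\ldots,\alpha_n$ are distinct. Set $V=\mathbb{C}[x]_{2n-2}$ (of dimension $2n-1$) and let $\operatorname{ev}\colon V\to\mathbb{C}^n$ be $H\mapsto(H(\alpha_1),\ldots,H(\alpha_n))$. Lagrange interpolation makes $\operatorname{ev}$ surjective, so $\dim\ker\operatorname{ev}=(2n-1)-n=n-1$; since $W:=F\cdot\mathbb{C}[x]_{n-2}$ lies in $\ker\operatorname{ev}$ (as $F$ vanishes at each $\alpha_j$) and also has dimension $n-1$, we get $W=\ker\operatorname{ev}$. As $\omega_0,\ldots,\omega_{n-2}$ is a basis of $\mathbb{C}[x]_{n-2}$, the polynomials $\omega_{n-2}F,\ldots,\omega_0F$ form a basis of $W$; in the matrix defining $\operatorname*{dp}[\,\cdot\,]$ these are exactly the first $n-1$ rows, so adding any element of $W$ to one of the $G_l$ merely adds a $\mathbb{C}$-linear combination of those first $n-1$ rows and leaves $\operatorname*{dp}$ unchanged. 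Hence $\operatorname*{dp}[\omega_{n-2}F,\ldots,\omega_0F,G_1,\ldots,G_n]$ depends on $(G_1,\ldots,G_n)$ only through $(\operatorname{ev}(G_1),\ldots,\operatorname{ev}(G_n))$, and by linearity of $\operatorname*{dp}$ in each row together with the linearity and surjectivity of $\operatorname{ev}$ this induced dependence is a multilinear alternating function of the $n$ vectors $\operatorname{ev}(G_l)\in\mathbb{C}^n$. By the universal property of the determinant there is a scalar $\kappa$, depending on $F$ and the $\omega_k$ but not on the $G_l$, with
\[
\operatorname*{dp}\!\left[\omega_{n-2}F,\ldots,\omega_0F,G_1,\ldots,G_n\right]=\kappa\cdot\det\bigl[G_l(\alpha_j)\bigr]_{1\le l,j\le n}.
\]

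What remains, and what I expect to be the delicate step, is to identify $\kappa$. Taking $G_l=\omega_{n-l}$ for $l=1,\ldots,n$ turns the determinant on the right into the denominator $\det[\omega_{n-i}(\alpha_j)]_{1\le i,j\le n}$ (nonzero for distinct roots), so $\kappa$ equals $\operatorname*{dp}[\omega_{n-2}F,\ldots,\omega_0F,\omega_{n-1},\omega_{n-2},\ldots,\omega_0]$ divided by that denominator. To evaluate this $(2n-1)\times(2n-1)$ determinant, note that its $k$-th row has top degree exactly $2n-1-k$ (namely $2n-2,2n-3,\ldots,n$ for the rows $\omega_{n-2}F,\ldots,\omega_0F$ and $n-1,n-2,\ldots,0$ for the rows $\omega_{n-1},\ldots,\omega_0$), so with respect to the monomial basis $x^{2n-2},\ldots,x^0$ the coefficient matrix is triangular, and its determinant is the product of the diagonal (leading) entries, which --- normalizing the $\omega_k$ to be monic, as they are in the applications --- equals $a_0^{\,n-1}$. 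Hence $\kappa=a_0^{\,n-1}\big/\det[\omega_{n-i}(\alpha_j)]_{1\le i,j\le n}$, and substituting this into the displayed identity gives the lemma. In short, the two load-bearing points are the dimension count $W=\ker\operatorname{ev}$, which licenses the collapse from a $(2n-1)$-dimensional determinant in the coefficients to an $n$-dimensional determinant in the roots, and the triangular evaluation that fixes the constant; everything else is formal. (A more computational alternative --- padding $\alpha_1,\ldots,\alpha_n$ with $n-1$ generic points, right-multiplying the coefficient matrix by the associated $(2n-1)\times(2n-1)$ Vandermonde matrix, and reading off the resulting block-triangular determinant --- produces the constant automatically, at the cost of the same Vandermonde cancellations.)
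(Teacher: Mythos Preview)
Your proof is correct and complete (modulo the monicity of the $\omega_k$, which you flag explicitly and which the paper's own argument also tacitly assumes; the lemma as stated is actually false for non-monic $\omega_k$, but the only application uses $\omega_k=x^k$). The route, however, is genuinely different from the paper's. You argue abstractly: identify $\ker(\operatorname{ev})$ with $F\cdot\mathbb{C}[x]_{n-2}$, deduce that $\operatorname{dp}$ factors through $\operatorname{ev}$ on the $G$-rows, invoke the universal property of the determinant to produce the unknown scalar $\kappa$, and then evaluate $\kappa$ by the clever specialization $G_l=\omega_{n-l}$ that collapses the $(2n-1)\times(2n-1)$ determinant to a triangular one. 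The paper instead does a single concrete matrix computation: it writes the coefficient matrix $M$ in block form, right-multiplies by the explicit block-triangular matrix $W=\left(\begin{smallmatrix} I_{n-1}&U\\0&V\end{smallmatrix}\right)$ with $U,V$ the evaluation matrices at the roots, observes that the $F(\alpha_i)=0$ relations annihilate the upper-right block of $MW$, and reads off $|M|\,|V|=|A|\cdot\det[G_l(\alpha_j)]$ with $|A|=a_0^{\,n-1}$ directly. Your parenthetical ``computational alternative'' is close in spirit to this, though the paper avoids padding with auxiliary points by using the identity block instead. The paper's approach has the minor advantage of yielding the cleared polynomial identity in one stroke (no separate Zariski-density reduction needed), while yours isolates more clearly \emph{why} the identity holds---namely, that the $F$-rows exactly span the kernel of evaluation---and would adapt more readily to variations of the setup.
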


\begin{proof}
We will derive the expression step by step.

\begin{enumerate}
\item Let%
\[
M_{F}=\left[
\begin{array}
[c]{c}%
\omega_{n-2}F\\
\vdots\\
\omega_{0}F
\end{array}
\right]  \ \ \ \ \ \ M_{G}=\left[
\begin{array}
[c]{c}%
G_{1}\\
\vdots\\
G_{n}%
\end{array}
\right]
\]

\item Let $M\in\mathbb{C}^{\left(  2n-1\right)  \times\left(  2n-1\right)  }$
be such that%
\[
\left[
\begin{array}
[c]{c}%
M_{F}\\
M_{G}%
\end{array}
\right]  =M\left[
\begin{array}
[c]{c}%
\omega_{2n-2}\\
\vdots\\
\omega_{0}%
\end{array}
\right]
\]
Then by the definition of $\operatorname*{dp}$, we have $\operatorname*{dp}%
\left[
\begin{array}
[c]{c}%
M_{F}\\
M_{G}%
\end{array}
\right]  =\det\left( M\right)  $.

\item Let us partition $M$ naturally as%
\[
M=\left[
\begin{array}
[c]{cc}%
A & B\\
C & D
\end{array}
\right]
\]
where%
\[%
\begin{array}
[c]{ll}%
A\in\mathbb{C}^{\left(  n-1\right)  \times\left(  n-1\right)  } &
B\in\mathbb{C}^{\left(  n-1\right)  \times n}\\
C\in\mathbb{C}^{n\times\left(  n-1\right)  } & D\in\mathbb{C}^{n\times n}%
\end{array}
\]

\item Now we introduce a crucial object in the derivation.%
\[
W=\left[
\begin{array}
[c]{cc}%
I_{n-1} & U\\
& V
\end{array}
\right]
\]
where
\[
U=\left[
\begin{array}
[c]{ccc}%
\omega_{2n-2}\left(  \alpha_{1}\right)  & \cdots & \omega_{2n-2}\left(
\alpha_{n}\right) \\
\vdots &  & \vdots\\
\omega_{n}\left(  \alpha_{1}\right)  & \cdots & \omega_{n}\left(  \alpha
_{n}\right)
\end{array}
\right]  \ \ \ \text{and }\ \ V=\left[
\begin{array}
[c]{ccc}%
\omega_{n-1}\left(  \alpha_{1}\right)  & \cdots & \omega_{n-1}\left(
\alpha_{n}\right) \\
\vdots &  & \vdots\\
\omega_{0}\left(  \alpha_{1}\right)  & \cdots & \omega_{0}\left(  \alpha
_{n}\right)
\end{array}
\right]
\]
Note that $V$ is the generalized Vandermonde matrix of $\revise{F}$
(up to ordering of rows). A similar object was also used in
~\cite{2007_Andrea_Hong_Krick_Szanto} for studying Sylvester double sum.

\item Note%
\begin{align*}
MW  &  =\left[
\begin{array}
[c]{cc}%
A & B\\
C & D
\end{array}
\right]  \left[
\begin{array}
[c]{cc}%
I_{n-1} & U\\
0 & V
\end{array}
\right] \\
&  =\left[
\begin{array}
[c]{cc}%
A & AU+BV\\
C & CU+DV
\end{array}
\right] \\
&  =\left[
\begin{array}
[c]{cccc}%
A & M_{F}\left(  \alpha_{1}\right)  & \cdots & M_{F}\left(  \alpha_{n}\right)
\\
C & M_{G}\left(  \alpha_{1}\right)  & \cdots & M_{G}\left(  \alpha_{n}\right)
\end{array}
\right] \\
&  =\left[
\begin{array}
[c]{cccc}%
A &  &  & \\
C & M_{G}\left(  \alpha_{1}\right)  & \cdots & M_{G}\left(  \alpha_{n}\right)
\end{array}
\right]  \ \ \ \text{since }F\left(  \alpha_{i}\right)  =0
\end{align*}

\item Thus%
\begin{align*}
\det\left(  M\right)  \cdot\det\left(  W\right)  &=
\det\left(  A\right)
\cdot\det\left[
\begin{array}
[c]{ccc}%
M_{G}\left(  \alpha_{1}\right)  & \cdots & M_{G}\left(  \alpha_{n}\right)
\end{array}
\right]\\
&=\det\left(  A\right)
\cdot\det\left[
\begin{array}
[c]{ccc}%
G\left(  \alpha_{1}\right)  & \cdots & G\left(  \alpha_{n}\right)
\end{array}
\right]
\end{align*}
\revise{where $G(\alpha_i)=\left[
\begin{array}
[c]{ccc}G_1\left(  \alpha_{i}\right)  & \cdots & G_n\left(  \alpha_{i}\right)
\end{array}
\right]^T$.}

\item Note that $\det\left(  W\right)  =\det\left(  I_{n-1}\right)  \cdot
\det\left(  V\right)  =\det\left(  V\right)  $ since $W$ is block-triangular.

\item Note that $\det\left(  A\right)  =a_{n}^{n-1}$ since $A$ is triangular
and the diagonal elements are $a_{n}$.

\item By putting together we have $\operatorname*{dp}\left[
\begin{array}
[c]{c}%
M_{F}\\
M_{G}%
\end{array}
\right]  =\det\left(  M\right)  =\dfrac{a_{n}^{n-1}\cdot\det\left[
\begin{array}
[c]{ccc}%
G\left(  \alpha_{1}\right)  & \cdots & G\left(  \alpha_{n}\right)
\end{array}
\right]  }{\det\left(  V\right)  }$
\end{enumerate}
\end{proof}

\begin{example}
Let $F=a_{3}x^{3}+a_{2}x^{2}+a_{1}x+a_{0}=a_{3}(x-\alpha_{1})(x-\alpha
_{2})(x-\alpha_{3})$ and $G_{i}=x^{2}F^{(i)}/i!$. Let $\omega_{i}=x^{i}$.
Then
\begin{align*}
\operatorname*{dp}\left[
\begin{array}
[c]{c}%
\omega_{1}F\\
\omega_{0}F\\
G_{1}\\
G_{2}\\
G_{3}%
\end{array}
\right]   &  =\det\left[
\begin{array}
[c]{ccccc}%
a_{3} & a_{2} & a_{1} & a_{0} & \\
0 & a_{3} & a_{2} & a_{1} & a_{0}\\
3a_{3} & 2\,a_{2} & a_{1} &  & \\
& 3a_{3} & a_{2} &  & \\
&  & a_{3} &  &
\end{array}
\right]  =9\,a_{3}^{3}a_{0}^{2}%
\end{align*}

\begin{align*}
& \det{\left[
\begin{array}
[c]{ccc}%
G_{1}(\alpha_{1}) & G_{1}\left(  \alpha_{2}\right)  & G_{1}(\alpha_{3})\\
G_{2}(\alpha_{1}) & G_{2}\left(  \alpha_{2}\right)  & G_{2}(\alpha_{3})\\
G_{3}(\alpha_{1}) & G_{3}\left(  \alpha_{2}\right)  & G_{3}(\alpha_{3})
\end{array}
\right]  }\\
=  &  \det\left[
\begin{array}
[c]{ccc}%
a_{3}{\alpha_{{1}}}^{2}\left(  \alpha_{{1}}-\alpha_{{2}}\right)  \left(
\alpha_{{1}}-\alpha_{{3}}\right)  & a_{3}{\alpha_{{2}}}^{2}\left(  \alpha
_{{2}}-\alpha_{{1}}\right)  \left(  \alpha_{{2}}-\alpha_{{3}}\right)  &
a_{3}{\alpha_{{3}}}^{2}\left(  \alpha_{{3}}-\alpha_{{1}}\right)  \left(
\alpha_{{3}}-\alpha_{{2}}\right) \\[4pt]%
\noalign{\medskip}a_{3}{\alpha_{{1}}}^{2}\left(  2\,\alpha_{{1}}-\alpha_{{3}%
}-\alpha_{{2}}\right)  & a_{3}{\alpha_{{2}}}^{2}\left(  2\,\alpha_{{2}}%
-\alpha_{{3}}-\alpha_{{1}}\right)  & a_{3}{\alpha_{{3}}}^{2}\left(
2\,\alpha_{{3}}-\alpha_{{2}}-\alpha_{{1}}\right) \\
\noalign{\medskip}a_{3}{\alpha_{{1}}}^{2} & a_{3}{\alpha_{{2}}}^{2} &
a_{3}{\alpha_{{3}}}^{2}%
\end{array}
\right] \\[4pt]
=  &  9\,a_{3}^{3}{\alpha_{{1}}}^{2}{\alpha_{{2}}}^{2}{\alpha_{{3}}}%
^{2}\left(  \alpha_{{2}}-\alpha_{{3}}\right)  \left(  \alpha_{{1}}-\alpha
_{{3}}\right)  \left(  \alpha_{{1}}-\alpha_{{2}}\right) \\[4pt]
=  &  9\,a_{3}^{1}a_{0}^{2}\left(  \alpha_{{2}}-\alpha_{{3}}\right)  \left(
\alpha_{{1}}-\alpha_{{3}}\right)  \left(  \alpha_{{1}}-\alpha_{{2}}\right)
\\[8pt]
&\det{\left[
\begin{array}
[c]{ccc}%
\omega_{2}\left(  \alpha_{1}\right)  & \omega_{2}\left(  \alpha_{2}\right)  &
\omega_{2}\left(  \alpha_{3}\right) \\
\omega_{1}\left(  \alpha_{1}\right)  & \omega_{1}\left(  \alpha_{2}\right)  &
\omega_{1}\left(  \alpha_{3}\right) \\
\omega_{0}\left(  \alpha_{1}\right)  & \omega_{0}\left(  \alpha_{2}\right)  &
\omega_{0}\left(  \alpha_{3}\right)
\end{array}
\right]  }\ =\det\left[
\begin{array}
[c]{ccc}%
\alpha_{1}^{2} & \alpha_{2}^{2} & \alpha_{3}^{2}\\
\alpha_{1}^{1} & \alpha_{2}^{1} & \alpha_{3}^{1}\\
\alpha_{1}^{0} & \alpha_{2}^{0} & \alpha_{3}^{0}%
\end{array}
\right]  =(\alpha_{1}-\alpha_{2})(\alpha_{1}-\alpha_{3})(\alpha_{2}-\alpha
_{3})
\end{align*}
Thus we have
\[
\operatorname*{dp}\left[
\begin{array}
[c]{c}%
\omega_{1}F\\
\omega_{0}F\\
G_{1}\\
G_{2}\\
G_{3}%
\end{array}
\right]  =\frac{a_{3}^{2}\cdot\det{\left[
\begin{array}
[c]{ccc}%
G_{1}(\alpha_{1}) & G_{1}\left(  \alpha_{2}\right)  & G_{1}(\alpha_{3})\\
G_{2}(\alpha_{1}) & G_{2}\left(  \alpha_{2}\right)  & G_{2}(\alpha_{3})\\
G_{3}(\alpha_{1}) & G_{3}\left(  \alpha_{2}\right)  & G_{3}(\alpha_{3})
\end{array}
\right]  }}{{\det\left[
\begin{array}
[c]{ccc}%
\omega_{2}\left(  \alpha_{1}\right)  & \omega_{2}\left(  \alpha_{2}\right)  &
\omega_{2}\left(  \alpha_{3}\right) \\
\omega_{1}\left(  \alpha_{1}\right)  & \omega_{1}\left(  \alpha_{2}\right)  &
\omega_{1}\left(  \alpha_{3}\right) \\
\omega_{0}\left(  \alpha_{1}\right)  & \omega_{0}\left(  \alpha_{2}\right)  &
\omega_{0}\left(  \alpha_{3}\right)
\end{array}
\right]  }}%
\]

\end{example}

\subsection{Proof of Main Result (Theorem \ref{thm:multdiscriminant})}

Finally we will prove the main result by combining the above three lemmas
(Lemma \ref{lem:expr-in-roots}, \ref{lem:gen-D-ratio} and \ref{lem:ratio}).

\begin{enumerate}
\item $\,$From Lemma \ref{lem:expr-in-roots}, we have
\[
\operatorname*{mult}\left(  F\right)  =\mu\ \ \Longleftrightarrow
\operatorname*{per}\left[
\begin{array}
[c]{ccc}%
\dfrac{F^{(p_{1})}(\alpha_{1})}{p_{1}!} & \cdots & \dfrac{F^{(p_{1})}%
(\alpha_{n})}{p_{1}!}\\[8pt]%
\vdots &  & \vdots\\[8pt]%
\dfrac{F^{(p_{n})}(\alpha_{1})}{p_{n}!} & \cdots & \dfrac{F^{(p_{n})}%
(\alpha_{n})}{p_{n}!}%
\end{array}
\right]  \neq0
\]

\item Applying Lemma \ref{lem:gen-D-ratio} to
\[
A=\left[
\begin{array}
[c]{ccc}%
\alpha_{1}^{n-1} & \cdots & \alpha_{n}^{n-1}\\
\vdots &  & \vdots\\
\alpha_{1}^{0} & \cdots & \alpha_{n}^{0}%
\end{array}
\right]  ,\quad B=\left[
\begin{array}
[c]{ccc}%
\dfrac{F^{(p_{1})}(\alpha_{1})}{p_{1}!} & \cdots & \dfrac{F^{(p_{1})}%
(\alpha_{n})}{p_{1}!}\\[8pt]%
\vdots &  & \vdots\\[8pt]%
\dfrac{F^{(p_{n})}(\alpha_{1})}{p_{n}!} & \cdots & \dfrac{F^{(p_{n})}%
(\alpha_{n})}{p_{n}!}%
\end{array}
\right]
\]
and dividing $\det\left(  A\right)  $ on both sides, we have
\[
\setlength{\arraycolsep}{4.2pt}\operatorname*{per}\left[
\begin{array}
[c]{ccc}%
\dfrac{F^{(p_{1})}(\alpha_{1})}{p_{1}!} & \cdots & \dfrac{F^{(p_{1})}%
(\alpha_{n})}{p_{1}!}\\[8pt]%
\vdots &  & \vdots\\[8pt]%
\dfrac{F^{(p_{n})}(\alpha_{1})}{p_{n}!} & \cdots & \dfrac{F^{(p_{n})}%
(\alpha_{n})}{p_{n}!}%
\end{array}
\right]  =\,\dfrac{\sum\limits_{\tau\in S_{n}}\det\left(\left[
\begin{array}
[c]{ccc}%
\alpha_{1}^{n-1} & \cdots & \alpha_{n}^{n-1}\\
\vdots &  & \vdots\\
\alpha_{1}^{0} & \cdots & \alpha_{n}^{0}%
\end{array}
\right]  \circ P_{\tau}\left[
\begin{array}
[c]{ccc}%
\dfrac{F^{(p_{1})}(\alpha_{1})}{p_{1}!} & \cdots & \dfrac{F^{(p_{1})}%
(\alpha_{n})}{p_{1}!}\\[8pt]%
\vdots &  & \vdots\\[8pt]%
\dfrac{F^{(p_{n})}(\alpha_{1})}{p_{n}!} & \cdots & \dfrac{F^{(p_{n})}%
(\alpha_{n})}{p_{n}!}%
\end{array}
\right]\right)  }{\operatorname*{det}\left[
\begin{array}
[c]{ccc}%
\alpha_{1}^{n-1} & \cdots & \alpha_{n}^{n-1}\\
\vdots &  & \vdots\\
\alpha_{1}^{0} & \cdots & \alpha_{n}^{0}%
\end{array}
\right]  }%
\]
Since for $\sigma\in S_{p},\text{there are }c=\prod_{i=1}^{\ell}q_{i}%
!~\tau\text{'s such that }\tau(p)=\sigma$ where $q_{1},\ldots,q_{\ell}$ are
the occurrences of distinct numbers in $p$, we have
\begin{align*}
\setlength{\arraycolsep}{1.5pt}\operatorname*{per}\left[
\begin{array}
[c]{ccc}%
\dfrac{F^{(p_{1})}(\alpha_{1})}{p_{1}!} & \cdots & \dfrac{F^{(p_{1})}%
(\alpha_{n})}{p_{1}!}\\[8pt]%
\vdots &  & \vdots\\[8pt]%
\dfrac{F^{(p_{n})}(\alpha_{1})}{p_{n}!} & \cdots & \dfrac{F^{(p_{n})}%
(\alpha_{n})}{p_{n}!}%
\end{array}
\right]   &  =c\cdot\,\dfrac{\sum\limits_{\sigma\in S_{p}}\operatorname*{det}%
\left(  \left[
\begin{array}
[c]{ccc}%
\alpha_{1}^{n-1} & \cdots & \alpha_{n}^{n-1}\\
\vdots &  & \vdots\\
\alpha_{1}^{0} & \cdots & \alpha_{n}^{0}%
\end{array}
\right]  \circ\left[
\begin{array}
[c]{ccc}%
\dfrac{F^{(\sigma_{1})}(\alpha_{1})}{\sigma_{1}!} & \cdots & \dfrac
{F^{(\sigma_{1})}(\alpha_{n})}{\sigma_{1}!}\\[8pt]%
\vdots &  & \vdots\\[8pt]%
\dfrac{F^{(\sigma_{n})}(\alpha_{1})}{\sigma_{n}!} & \cdots & \dfrac
{F^{(\sigma_{n})}(\alpha_{n})}{\sigma_{n}!}%
\end{array}
\right]  \right)  }{\operatorname*{det}\left[
\begin{array}
[c]{ccc}%
\alpha_{1}^{n-1} & \cdots & \alpha_{n}^{n-1}\\
\vdots &  & \vdots\\
\alpha_{1}^{0} & \cdots & \alpha_{n}^{0}%
\end{array}
\right]  }\\
&  =\,c\cdot\sum\limits_{\sigma\in S_{p}}\frac{\operatorname*{det}\left[
\begin{array}
[c]{ccc}%
\alpha_{1}^{n-1}\dfrac{F_{1}^{\left(  \sigma_{1}\right)  }\left(  \alpha
_{1}\right)  }{\sigma_{1}!} & \cdots & \alpha_{n}^{n-1}\dfrac{F_{n}^{\left(
\sigma_{1}\right)  }\left(  \alpha_{n}\right)  }{\sigma_{1}!}\\[8pt]%
\vdots &  & \vdots\\[8pt]%
\alpha_{1}^{0}\dfrac{F_{1}^{\left(  \sigma_{n}\right)  }\left(  \alpha
_{1}\right)  }{\sigma_{n}!} & \cdots & \alpha_{n}^{0}\dfrac{F_{n}^{\left(
\sigma_{n}\right)  }\left(  \alpha_{n}\right)  }{\sigma_{n}!}%
\end{array}
\right]  }{\operatorname*{det}\left[
\begin{array}
[c]{ccc}%
\alpha_{1}^{n-1} & \cdots & \alpha_{n}^{n-1}\\
\vdots &  & \vdots\\
\alpha_{1}^{0} & \cdots & \alpha_{n}^{0}%
\end{array}
\right]  }%
\end{align*}

\item By applying Lemma \ref{lem:ratio} to $G_{i}(x)=\dfrac{x^{n-i}%
F^{(\sigma_{i})}(x)}{\sigma_{i}!}$ and $\omega_{i}=x^{i}$, we have
\[
\dfrac{a_{n}^{n-1}\cdot\operatorname*{det}\left[
\begin{array}
[c]{ccc}%
\alpha_{1}^{n-1}\dfrac{F_{{}}^{\left(  \sigma_{1}\right)  }\left(  \alpha
_{1}\right)  }{\sigma_{1}!} & \cdots & \alpha_{n}^{n-1}\dfrac{F^{\left(
\sigma_{1}\right)  }\left(  \alpha_{n}\right)  }{\sigma_{1}!}\\[8pt]%
\vdots &  & \vdots\\[8pt]%
\alpha_{1}^{0}\dfrac{F^{\left(  \sigma_{n}\right)  }\left(  \alpha_{1}\right)
}{\sigma_{n}!} & \cdots & \alpha_{n}^{0}\dfrac{F^{\left(  \sigma_{n}\right)
}\left(  \alpha_{n}\right)  }{\sigma_{n}!}%
\end{array}
\right]  }{\operatorname*{det}\left[
\begin{array}
[c]{ccc}%
\alpha_{1}^{n-1} & \cdots & \alpha_{n}^{n-1}\\
\vdots &  & \vdots\\
\alpha_{1}^{0} & \cdots & \alpha_{n}^{0}%
\end{array}
\right]  }=\operatorname*{dp}\left[
\begin{array}
[c]{c}%
x^{n-2}F\\
\vdots\\
x^{0}F\\
x^{n-1}F^{\left(  \sigma_{_{1}}\right)  }/\sigma_{1}!\\
\vdots\\
x^{0}F^{\left(  \sigma_{_{n}}\right)  }/\sigma_{n}!
\end{array}
\right]  .
\]

\end{enumerate}

\noindent Finally, combining the above three steps, we have
\[
\operatorname*{mult}\left(  F\right)  =\mu\ \ \Longleftrightarrow\ \ D_{\mu
}\left(  F\right)  \neq0.
\]
We have proved the main result (Theorem \ref{thm:multdiscriminant}).

\section{Comparison}

\label{sec:compare}In this section, we compare the \textquotedblleft
sizes\textquotedblright\ of the multiplicity-discriminanting condition in Theorem
\ref{thm:multdiscriminant} and that given by a complex root version of
Yang-Hou-Zeng (YHZ) \cite{1996_Yang_Hou_Zeng}.\footnote{They introduced the
key concept of complete discriminant system for polynomials which is a set of
explicit expressions of the coefficients to determine the numbers and
multiplicities of real and non-real roots. The conditions in the solution to
the complete root classification problem consists of equations, inequations
and inequalities. When restricted to discriminating only multiplicities of
roots (without discriminating between real and non-real roots), computing a
complete discriminant system is equivalent to computing greatest common
divisors iteratively.} Specifically we compare the number and the maximum
degrees of polynomials appearing in the conditions. In Table \ref{tbl:compare}%
, we show a comparison for $n=8$. They were determined through brute-force
computations. In the table, we used the following short-hands:

\begin{table}[t]
\caption{Comparison}%
\label{tbl:compare}
\begin{center}%
\begin{tabular}
[c]{|c|c|l||c|c||c|c|}\hline
$n$ & $m$ & $\mu$ & \#$_{\mathrm{NEW}}$ & \#$_{\mathrm{YHZ}}$ &
$d_{\mathrm{NEW}}$ & $d_{\mathrm{YHZ}}$\\\hline
\multirow{18}{*}{8} & \multirow{4}{*}{2} & $[4,4]$ & 1 & 7 & 12 &
81\\\cline{3-7}
&  & $[5,3]$ & 1 & 8 & 13 & 81\\\cline{3-7}
&  & $[6,2]$ & 1 & 11 & 14 & 63\\\cline{3-7}
&  & $[7,1]$ & 1 & 16 & 15 & 33\\\cline{2-3}\cline{3-7}
& \multirow{4}{*}{3} & $[3,3,2]$ & 1 & 3 & 14 & 75\\\cline{3-7}
&  & $[4,2,2]$ & 1 & 4 & 14 & 75\\\cline{3-7}
&  & $[4,3,1]$ & 1 & 5 & 15 & 75\\\cline{3-7}
&  & $[5,2,1]$ & 1 & 7 & 15 & 75\\\cline{3-7}
&  & $[6,1,1]$ & 1 & 11 & 15 & 45\\\cline{2-3}\cline{3-7}
& \multirow{4}{*}{4} & $[2,2,2,2]$ & 1 & 1 & 14 & 49\\\cline{3-7}
&  & $[3,2,2,1]$ & 1 & 2 & 15 & 49\\\cline{3-7}
&  & $[3,3,1,1]$ & 1 & 3 & 15 & 63\\\cline{3-7}
&  & $[4,2,1,1]$ & 1 & 4 & 15 & 63\\\cline{2-3}\cline{3-7}
& \multirow{3}{*}{5} & $[2,2,2,1,1]$ & 1 & 1 & 15 & 45\\\cline{3-7}
&  & $[3,2,1,1,1]$ & 1 & 2 & 15 & 45\\\cline{3-7}
&  & $[4,1,1,1,1]$ & 1 & 4 & 15 & 45\\\cline{2-3}\cline{3-7}
& \multirow{2}{*}{6} & $[2,2,1,1,1,1]$ & 1 & 1 & 15 & 33\\\cline{3-7}
&  & $[3,1,1,1,1,1]$ & 1 & 2 & 15 & 33\\\hline
\end{tabular}
\end{center}
\end{table}

\begin{itemize}
\item $\#_{\mathrm{NEW}}$ denotes the number of polynomials appearing in the
new condition (Theorem \ref{thm:multdiscriminant})

\item $\#_{\mathrm{YHZ}}$ denotes the number of polynomials appearing in the
YHZ's condition

\item $d_{\mathrm{NEW}}$ denotes the degree of the polynomial $D_{\mu}$
appearing in the new condition (Theorem \ref{thm:multdiscriminant})

\item $d_{\mathrm{YHZ}}$ denotes the maximum of the degrees of the polynomials
appearing in the YHZ's condition.
\end{itemize}

\noindent We make a few observations on the table.

\begin{enumerate}
\item Concerning the number of polynomials:

\begin{enumerate}
\item Observe that $\#_{\mathrm{NEW}}=1$ always. It is obvious from Theorem
\ref{thm:multdiscriminant}.

\item Observe that $\#_{\mathrm{YHZ}}=1$ when the entries of $\mu$ are at most
$2$ and that $\#_{\mathrm{YHZ}}$ is large when some entries of $\mu$ are
large. In fact, the observations hold in general, since straightforward
book-keeping of YHZ's algorithm immediately shows that%
\[
\#_{\mathrm{YHZ}}=1+\sum\limits_{i=1}^{m}{\binom{\mu_{i}-1}{2}}%
\]
For a proof, see Lemma \ref{lem:size_YHZ} in Appendix.

\item Hence $\#_{\mathrm{NEW}}\leq\ \#_{\mathrm{YHZ}}$ always and $=$ holds
only when the entries of $\mu$ are at most $2$.
\end{enumerate}

\item Concerning the maximum degree of polynomials:

\begin{enumerate}
\item Observe that $d_{\mathrm{NEW}}\leq2n-1=15$.
\revise{Recall that
\[
d_{\mathrm{NEW}}=2\,n-\mu_{m}\]
in Remark \ref{rem:degree_of_D}}.

\item Observe that $d_{\mathrm{YHZ}}\geq2n-1=15$ and that $d_{\mathrm{YHZ}}$
is large when some entries of $\mu$ are large. In fact, the observations is
conjectured to hold in general, since it can be shown, under some minor and
reasonable assumption, that
\[
d_{\mathrm{YHZ}}\geq2n+3^{\mu_{2}}-4\mu_{2}%
\]
For a proof, see Lemma \ref{lem:size_YHZ} in Appendix.

\item Hence most likely $\#_{\mathrm{NEW}}\leq\ \#_{\mathrm{YHZ}}$ always.
\end{enumerate}
\end{enumerate}

\medskip\noindent\textbf{Acknowledgements.} The second author's work was
supported by National Natural Science Foundation of China (Grant No. 11801101)
and Guangxi Science and Technology Program (Grant No. 2017AD23056).

\appendix

\section*{Appendix: Analysis of size of YHZ's condition}

We reproduce the result for the complex root case of the YHZ's method
\label{lem:YHZ} for readers' convenience. Assume $F$ is of degree $n$ with $m$
distinct roots. Let $\mu$ be an $m$-partition of $n$. Then we have%
\[
\operatorname*{mult}\left(  F\right)  =\mu\ \ \ \ \Longleftrightarrow
\ \ \ \left(  \bigwedge\limits_{i=1}^{\mu_{1}-2}\ \ \bigwedge\limits_{j=0}%
^{s_{i+1}-1}\overline{S_{j}\left(  G_{i}\right)  }=0\right)  \ \ \ \wedge
\ \ \overline{S_{0}\left(  G_{\mu_{1}-1}\right)  }\neq0
\]
where

\begin{itemize}
\item $S_{k}(G)=$ the $k$-th subresultant of $G\ $and $G^{\prime}$.

\item $\overline{S_{k}}=$ the coefficient of $x^{k}$ in $S_{k}\left(
G\right)  $

\item $s_{i}=\sum_{j=1}^{m}\max(\mu_{j}-i,0)$

\item $G_{i}=\left\{
\begin{array}
[c]{lll}%
F & \text{if } & i=0\\
S_{s_{i}}\left(  G_{i-1}\right)  & \text{if} & i>0
\end{array}
\right.  $
\end{itemize}

\begin{assumption}
\label{assumption:nonzero} We will assume that $\overline{S_{0}\left(
G_{j}\right)  }$ in the above YHZ's condition is not identically $0$ as a polynomial
on the coefficients of $F$.
\end{assumption}

\noindent We make this assumption because

\begin{enumerate}
\item It simplifies the analysis of the size of the condition produced by the
YHZ's method.

\item Numerous direct computations support its truth.

\item However, so far, we were not able to prove it.
\end{enumerate}

\begin{lemma}
[Size of YHZ's condition]\label{lem:size_YHZ}Under Assumption
\ref{assumption:nonzero}, we have

\begin{enumerate}
\item $\#_{\mathrm{YHZ}}=1+\sum\limits_{i=1}^{m}{\binom{\mu_{i}-1}{2}}$.

\item $d_{\mathrm{YHZ}}=\prod\limits_{j=0}^{\mu_{2}-1}(2\,m_{j}-1)\left\{
\begin{array}
[c]{lll}%
1 & \text{if} & \mu_{1}=\mu_{2}\\
1+\frac{2}{2m_{\mu_{2}-1}-1} & \text{if} & \mu_{1}=\mu_{2}+1\\
\left(  2\left(  \mu_{1}-\mu_{2}\right)  -1\right)  & \text{if} & \mu_{1}%
>\mu_{2}+1
\end{array}
\right.  $

where $m_{i}$ is the largest $k$ such that $\mu_{k}>i$.

\item $d_{\mathrm{YHZ}}\geq2n+3^{\mu_{2}}-4\mu_{2}$.
\end{enumerate}
\end{lemma}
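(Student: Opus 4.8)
The plan is to read both quantities off the explicit description of the YHZ condition in Lemma~\ref{lem:YHZ}, using that on the stratum $\operatorname*{mult}(F)=\mu$ the iterated polynomial $G_i$ equals, up to a nonzero scalar, $\prod_j(x-r_j)^{\max(\mu_j-i,0)}$, so that $\deg_x G_i=s_i$ with $s_0=n$, $s_i$ strictly decreasing while positive, and the telescoping identity $s_{i-1}-s_i=m_{i-1}$, where $m_i$ is the number of parts of $\mu$ exceeding $i$. Statement~(1) is then immediate: the condition lists one inequation plus $\sum_{i=1}^{\mu_1-2}\sum_{j=0}^{s_{i+1}-1}1=\sum_{i=1}^{\mu_1-2}s_{i+1}=\sum_{i=2}^{\mu_1-1}s_i$ equations, and interchanging the two sums, $\sum_{i=2}^{\mu_1-1}s_i=\sum_j\sum_{i=2}^{\mu_1-1}\max(\mu_j-i,0)=\sum_j\binom{\mu_j-1}{2}$ since $\mu_j\le\mu_1$. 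One should check that $s_{i+1}$ never vanishes in this range (it does not, since then $\mu_1>i+1$) and, more delicately, that the equation polynomials are genuinely nonconstant and pairwise distinct, which is where Assumption~\ref{assumption:nonzero} enters.

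For statement~(2), the key is that $\overline{S_j(G)}$ is the principal $j$-th subresultant coefficient of $G$ and $G'$, a determinant of size $2\deg_x G-2j-1$ in the coefficients of $G$, homogeneous of that degree; hence, writing $\delta_i$ for the $a$-degree of $G_i$, the recursion $G_i=S_{s_i}(G_{i-1})$ together with $s_{i-1}-s_i=m_{i-1}$ gives $\delta_i=(2m_{i-1}-1)\delta_{i-1}$, $\delta_0=1$, so $\delta_i=\prod_{j=0}^{i-1}(2m_j-1)$ — provided these iterated subresultants realise their generic $a$-degree, the non-degeneracy being the content of Assumption~\ref{assumption:nonzero} (suitably extended). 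Consequently $\deg_a\overline{S_j(G_i)}=(2s_i-2j-1)\delta_i$, decreasing in $j$, so $d_{\mathrm{YHZ}}=\max_{1\le i\le\mu_1-1}f(i)$ with $f(i)=(2s_i-1)\prod_{j=0}^{i-1}(2m_j-1)$. I would then analyse $f(i+1)/f(i)=\dfrac{(2s_{i+1}-1)(2m_i-1)}{2s_i-1}$, which using $s_i=s_{i+1}+m_i$ exceeds $1$ exactly when $(s_{i+1}-1)(m_i-1)\ge1$, i.e.\ when $m_i\ge2$ and $s_{i+1}\ge2$; reading this off $\mu$, namely $m_i\ge2\iff i\le\mu_2-1$ and $s_{i+1}=1$ only when $i=\mu_1-2$ and $\mu_2<\mu_1$, gives the three cases. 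If $\mu_1=\mu_2$, $f$ increases throughout, so the maximum is $f(\mu_1-1)$, which via $s_{\mu_1-1}=m_{\mu_1-1}$ equals $\prod_{j=0}^{\mu_2-1}(2m_j-1)$. If $\mu_1=\mu_2+1$, $f$ increases up to $i=\mu_2-1$ and then falls, so the maximum is $f(\mu_2-1)$, which via $s_{\mu_2-1}=1+m_{\mu_2-1}$ equals $(2m_{\mu_2-1}+1)\prod_{j=0}^{\mu_2-2}(2m_j-1)=\prod_{j=0}^{\mu_2-1}(2m_j-1)\bigl(1+\tfrac{2}{2m_{\mu_2-1}-1}\bigr)$. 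If $\mu_1>\mu_2+1$, $f$ increases up to $i=\mu_2$ (where $m_i$ drops to $1$) and then falls, so the maximum is $f(\mu_2)$, which via $s_{\mu_2}=\mu_1-\mu_2$ equals $(2(\mu_1-\mu_2)-1)\prod_{j=0}^{\mu_2-1}(2m_j-1)$; in each case the maximising index lies in $[1,\mu_1-1]$, using $\mu_2\ge2$ in the middle case (forced by Assumption~\ref{ass:m}, since $\mu_1=\mu_2+1$ and $\mu_2=1$ would give $\mu=(2,1,\dots,1)$, i.e.\ $m=n-1$).

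For statement~(3), I would combine the formula of~(2) with an extremal estimate. The numbers $m_0\ge\dots\ge m_{\mu_2-1}\ge2$ satisfy $\sum_{j=0}^{\mu_2-1}m_j=s_0-s_{\mu_2}=n-\mu_1+\mu_2$, and since $\log(2x-1)$ is concave, $\prod_{j=0}^{\mu_2-1}(2m_j-1)$ (and, in the middle case, $(2m_{\mu_2-1}+1)\prod_{j=0}^{\mu_2-2}(2m_j-1)$, whose minimising vertex has $m_{\mu_2-1}=2$) is minimised when all but one of the $m_j$ equal $2$. Writing $r=n-\mu_1-\mu_2\ge0$, this yields $d_{\mathrm{YHZ}}\ge3^{\mu_2-1}(2r+3)$ when $\mu_1=\mu_2$, $d_{\mathrm{YHZ}}\ge5\cdot3^{\mu_2-2}(2n-4\mu_2+1)$ when $\mu_1=\mu_2+1$, and $d_{\mathrm{YHZ}}\ge(2(\mu_1-\mu_2)-1)\,3^{\mu_2-1}(2r+3)$ when $\mu_1>\mu_2+1$. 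Substituting, using $n\ge\mu_1+\mu_2$ and (in the middle case) $\mu_2\ge2$, the target $d_{\mathrm{YHZ}}\ge2n+3^{\mu_2}-4\mu_2$ reduces in each case to an elementary inequality — for instance $(3^{\mu_2-1}-1)(2n-4\mu_2)\ge0$ when $\mu_1=\mu_2$, and comparisons of linear functions of $\mu_1-\mu_2$ and of $r$ in the other two cases.

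I expect the main obstacle to be the case analysis in~(2): pinning down rigorously, via the criterion $(s_{i+1}-1)(m_i-1)\ge1$, where the maximum of $f$ occurs, and then matching its value there to the stated closed form — in particular accounting cleanly for the non-integer factor $1+\tfrac{2}{2m_{\mu_2-1}-1}$ in the boundary case $\mu_1=\mu_2+1$. A secondary, more foundational point is verifying that the iterated subresultants suffer no unexpected drop in $a$-degree and that the equation polynomials are genuinely nonconstant and distinct, which is exactly what Assumption~\ref{assumption:nonzero} is there to guarantee.
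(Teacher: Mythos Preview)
Your argument for parts (1) and (2) is essentially identical to the paper's: the same interchange of summation for~(1), and for~(2) the same sequence $f(i)=d_i=(2s_i-1)\prod_{j<i}(2m_j-1)$ together with the same monotonicity criterion (your $(s_{i+1}-1)(m_i-1)\ge1$ is the paper's $(B-1)(C-1)>\tfrac12$ with $B=s_i$, $C=m_{i-1}$), leading to the same three-case evaluation.

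For part~(3) you take a genuinely different route. The paper first absorbs the sum constraint into the objective: in each case it uses $\sum_{j=0}^{\mu_2-1}2m_j$ (plus the tail contribution) $=2n$ to rewrite $d_{\mathrm{YHZ}}$ as $\prod(x_j-1)-\sum x_j+2n$ (or the obvious variant) in the variables $x_j=2m_j\ge4$, then minimises over the box by checking that every partial derivative is positive, so the minimum sits at the corner $(4,\dots,4)$ and evaluates directly to $3^{\mu_2}-4\mu_2+2n$ (or more). Your approach keeps the affine constraint $\sum m_j=n-\mu_1+\mu_2$, invokes concavity of the log to push the minimum of the product to a vertex of the simplex $\{m_j\ge2,\ \sum m_j=\text{const}\}$, and then checks the residual one- or two-parameter inequality. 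Both are correct; the paper's trick of trading the linear constraint for the linear term $-\sum x_j+2n$ is slicker and sidesteps the vertex comparison you need in the asymmetric middle case, while your approach yields a sharper intermediate lower bound on $d_{\mathrm{YHZ}}$ before the final estimate.
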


\begin{proof}
We will prove each one by one.

\begin{enumerate}
\item Immediate from%
\[
\#_{\mathrm{YHZ}}=1+\sum_{j=2}^{\mu_{1}-1}s_{j}=1+\sum_{j=2}^{\mu_{1}-1}%
\sum_{i=1}^{m}\max(\mu_{i}-j,0)=1+\sum_{i=1}^{m}\sum_{j=2}^{\mu_{1}-1}\max
(\mu_{i}-j,0)=1+\sum_{i=1}^{m}\sum_{j=2}^{\mu_{i}-1}(\mu_{i}-j)=1+\sum
_{i=1}^{m}{\binom{\mu_{i}-1}{2}}.
\]

\item The proof is a bit long and so we divide it into several steps.

\begin{enumerate}
\item Note%
\begin{align*}
d_{\mathrm{YHZ}} &  =\max\left(  \left(  \bigcup\limits_{i=1}^{\mu_{1}%
-2}\left\{  \deg\ \overline{S_{0}\left(  G_{i}\right)  },\ldots,\deg
\overline{S_{s_{i+1}-1}\left(  G_{i}\right)  }\right\}  \right)
\bigcup\left\{  \deg\overline{S_{0}\left(  G_{\mu_{1}-1}\right)  }\right\}
\right)  \\
&  =\max_{1\leq i\leq\mu_{1}-1}\deg\ \overline{S_{0}\left(  G_{i}\right)  }\\
&  =\max_{1\leq i\leq\mu_{1}-1}\deg_{a}(G_{i})\left(  2s_{i}-1\right)  \\
&  =\max_{1\leq i\leq\mu_{1}-1}\left(  \prod_{j=0}^{i-1}(2\,m_{j}-1)\right)
\left(  2s_{i}-1\right)  \ \ \text{since the size of matrices for the
coefficients of }G_{j+1}\ \ \text{is \ }2m_{j}-1\\
&  =\max_{1\leq i\leq\mu_{1}-1}\left(  \prod_{j=0}^{i-1}(2\,m_{j}-1)\right)
\left(  2\sum_{j=i}^{\mu_{1}-1}m_{j}-1\right)
\end{align*}

\item The above motivates the following notations.
\begin{align*}
d_{i}  &  =\left(  \prod_{j=0}^{i-1}(2\,m_{j}-1)\right)  \left(  2\sum
_{j=i}^{\mu_{1}-1}m_{j}-1\right) \\
A  &  =\prod_{j=0}^{i-2}(2\,m_{j}-1)\ \ \,\text{and \ }\ B=\sum_{j=i}^{\mu
_{1}-1}m_{j}\ \ \text{and }C=m_{i-1}%
\end{align*}

\item We need to find $i$ such that $d_{i}$ is the maximum. Note
\begin{align*}
d_{i}\leq d_{i-1}\ \  &  \Longleftrightarrow\ \ A\left(  2C-1\right)  \left(
2B-1\right)  -A\left(  2\left(  B+C\right)  -1\right)  \leq0\\
&  \Longleftrightarrow\ \ 4A\left(  \left(  B-1\right)  \left(  C-1\right)
-\frac{1}{2}\right)  \leq0\\
&  \Longleftrightarrow\ \ B=1\ \ \vee\ \ C=1\ \ \ \ \ \text{since }%
A,B,C\geq1\\
&  \Longleftrightarrow\ \ \sum_{j=i}^{\mu_{1}-1}m_{j}=1\ \ \vee\ \ m_{i-1}=1\\
&  \Longleftrightarrow\ \ \ \left(  m_{i}=1\ \wedge\ i=\mu_{1}-1\right)
\ \ \vee\ \ m_{i-1}=1\\
&  \Longleftrightarrow\ \ \ \mu_{1}>\mu_{2}\ \ \wedge\ \ \left(  i=\mu
_{1}-1\ \ \vee\ \ i>\mu_{2}\right) \\
&  \Longleftrightarrow\ \ \ \left(  \mu_{1}=\mu_{2}+1\ \ \wedge\ \ \left(
i=\mu_{1}-1\ \ \vee\ \ i>\mu_{2}\right)  \right)  \ \ \vee\ \ \left(  \mu
_{1}>\mu_{2}+1\ \ \wedge\ \ \left(  i=\mu_{1}-1\ \ \vee\ \ i>\mu_{2}\right)
\right) \\
&  \Longleftrightarrow\ \ \ \left(  \mu_{1}=\mu_{2}+1\ \ \wedge\ \ i\geq
\mu_{2}\right)  \ \ \vee\ \ \left(  \mu_{1}>\mu_{2}+1\ \ \wedge\ \ i>\mu
_{2}\right)
\end{align*}

\item Thus%
\begin{align*}
d_{\mathrm{YHZ}} &  =\max_{1\leq i\leq\mu_{1}-1}d_{i}\\
&  =\left\{
\begin{array}
[c]{lll}%
d_{\mu_{2}-1} & \text{if} & \mu_{1}=\mu_{2}\\
d_{\mu_{2}-1} & \text{if} & \mu_{1}=\mu_{2}+1\\
d_{\mu_{2}} & \text{if} & \mu_{1}>\mu_{2}+1
\end{array}
\right.  \\
&  =\left\{
\begin{array}
[c]{lll}%
\left(  \prod_{j=0}^{\mu_{2}-1-1}(2\,m_{j}-1)\right)  \left(  2\sum_{j=\mu
_{2}-1}^{\mu_{1}-1}m_{j}-1\right)   & \text{if} & \mu_{1}=\mu_{2}\\
\left(  \prod_{j=0}^{\mu_{2}-1-1}(2\,m_{j}-1)\right)  \left(  2\sum_{j=\mu
_{2}-1}^{\mu_{1}-1}m_{j}-1\right)   & \text{if} & \mu_{1}=\mu_{2}+1\\
\left(  \prod_{j=0}^{\mu_{2}-1}(2\,m_{j}-1)\right)  \left(  2\sum_{j=\mu_{2}%
}^{\mu_{1}-1}m_{j}-1\right)   & \text{if} & \mu_{1}>\mu_{2}+1
\end{array}
\right.  \\
&  =\left\{
\begin{array}
[c]{lll}%
\left(  \prod_{j=0}^{\mu_{2}-2}(2\,m_{j}-1)\right)  \left(  2m_{\mu_{2}%
-1}-1\right)   & \text{if} & \mu_{1}=\mu_{2}\\
\left(  \prod_{j=0}^{\mu_{2}-2}(2\,m_{j}-1)\right)  \left(  2\left(
1+m_{\mu_{2}-1}\right)  -1\right)   & \text{if} & \mu_{1}=\mu_{2}+1\\
\left(  \prod_{j=0}^{\mu_{2}-1}(2\,m_{j}-1)\right)  \left(  2\left(  \mu
_{1}-\mu_{2}\right)  -1\right)   & \text{if} & \mu_{1}>\mu_{2}+1
\end{array}
\right.  \\
&  =\left\{
\begin{array}
[c]{lll}%
\left(  \prod_{j=0}^{\mu_{2}-1}(2\,m_{j}-1)\right)   & \text{if} & \mu_{1}%
=\mu_{2}\\
\left(  \prod_{j=0}^{\mu_{2}-2}(2\,m_{j}-1)\right)  \left(  2m_{\mu_{2}%
-1}+1\right)   & \text{if} & \mu_{1}=\mu_{2}+1\\
\left(  \prod_{j=0}^{\mu_{2}-1}(2\,m_{j}-1)\right)  \left(  2\left(  \mu
_{1}-\mu_{2}\right)  -1\right)   & \text{if} & \mu_{1}>\mu_{2}+1
\end{array}
\right.  \\
&  =\prod\limits_{j=0}^{\mu_{2}-1}(2\,m_{j}-1)\left\{
\begin{array}
[c]{lll}%
1 & \text{if} & \mu_{1}=\mu_{2}\\
1+\frac{2}{2m_{\mu_{2}-1}-1} & \text{if} & \mu_{1}=\mu_{2}+1\\
\left(  2\left(  \mu_{1}-\mu_{2}\right)  -1\right)   & \text{if} & \mu_{1}%
>\mu_{2}+1
\end{array}
\right.
\end{align*}

\end{enumerate}

\item We will divide the proof into three cases. \newpage

\begin{enumerate}
\item $\mu_{1}=\mu_{2}$.

\begin{enumerate}
\item We rewrite%
\[
d_{\mathrm{YHZ}}=\prod_{j=0}^{\mu_{2}-1}(2\,m_{j}-1)=\left(  \prod_{j=0}%
^{\mu_{2}-1}(2\,m_{j}-1)\right)  -\left(  \sum_{j=0}^{\mu_{2}-1}2m_{j}\right)
+2n.
\]

\item Let
\[
G(x_{0},\ldots,x_{\mu_{2}-1})=\prod_{j=0}^{\mu_{2}-1}(x_{j}-1)-\sum_{j=0}%
^{\mu_{2}-1}x_{j}+2n
\]
over
\[
R=\{(x_{0},\ldots,x_{\mu_{2}-1}):\,x_{j}\geq2\cdot2\}.
\]

Then we have%
\[
d_{\mathrm{YHZ}}\geq\min_{x\in R}G\left(  x\right)
\]

\item Note
\[
\partial G/\partial x_{i}=\prod_{\substack{0\leq j\leq\mu_{2}-1\\j\neq
i}}(x_{j}-1)-1>0\ \text{over }R
\]
Hence%
\[
\min_{x\in R}G\left(  x\right)  =\prod_{j=0}^{\mu_{2}-1}(4-1)-\sum_{j=0}%
^{\mu_{2}-1}4+2n=3^{\mu_{2}}-4\mu_{2}+2n
\]

\item Hence%
\[
d_{\mathrm{YHZ}}\geq3^{\mu_{2}}-4\mu_{2}+2n
\]

\end{enumerate}

\item $\mu_{1}=\mu_{2}+1$.

\begin{enumerate}
\item We rewrite%
\begin{align*}
d_{\mathrm{YHZ}} &  =\left(  \prod_{j=0}^{\mu_{2}-1}(2\,m_{j}-1)\right)
\cdot\left(  1+\frac{2}{2m_{\mu_{2}-1}-1}\right)  \\
&  =\left(  \prod_{j=0}^{\mu_{2}-2}(2\,m_{j}-1)\right)  \left(  2m_{\mu_{2}%
-1}+1\right)  -\left(  2+\sum_{j=0}^{\mu_{2}-1}2m_{j}\right)  +2n.
\end{align*}

\item Let
\[
G(x_{0},\ldots,x_{\mu_{2}-1})=(x_{\mu_{2}-1}+1)\prod_{j=0}^{\mu_{2}-2}%
(x_{j}-1)-\left(  2+\sum_{j=0}^{\mu_{2}-1}x_{j}\right)  +2n
\]
over
\[
R=\{(x_{0},\ldots,x_{\mu_{2}-1}):\,x_{j}\geq2\cdot2\}.
\]

Then we have%
\[
d_{\mathrm{YHZ}}\geq\min_{x\in R}G\left(  x\right)
\]

\item Note
\[
\partial G/\partial x_{i}=\left\{
\begin{array}
[c]{ll}%
(x_{\mu_{2}-1}+1)\prod_{\substack{0\leq j\leq\mu_{2}-2\\j\neq i}%
}(x_{j}-1)-1, & \mathrm{if}\ \ i<\mu_{2}-1\\
\prod_{0\leq j\leq\mu_{2}-2}(x_{j}-1)-1, & \mathrm{if}\ \ i=\mu_{2}-1
\end{array}
\right.
\]
and $\partial G/\partial x_{i}>0$ over $R$. Hence%
\begin{align*}
\min_{x\in R}G\left(  x\right)   &  =(4+1)\prod_{j=0}^{\mu_{2}-2}(4-1)-\left(
2+\sum_{j=0}^{\mu_{2}-1}4\right)  +2n\\
&  =5\cdot3^{\mu_{2}-1}-4\mu_{2}-2+2n\\
&  =\frac{5}{3}\cdot3^{\mu_{2}}-4\mu_{2}-2+2n\\
&  =\left(  3^{\mu_{2}}-4\mu_{2}\right)  +\left(  \frac{2}{3}\cdot3^{\mu_{2}%
}-2\right)  +2n\\
&  \geq3^{\mu_{2}}-4\mu_{2}+2n
\end{align*}

\item Hence%
\[
d_{\mathrm{YHZ}}\geq3^{\mu_{2}}-4\mu_{2}+2n
\]

\end{enumerate}

\item $\mu_{1}>\mu_{2}+1$.

\begin{enumerate}
\item We rewrite%
\begin{align*}
d_{\mathrm{YHZ}} &  =\left(  \prod_{j=0}^{\mu_{2}-1}(2\,m_{j}-1)\right)
\left(  2\left(  \mu_{1}-\mu_{2}\right)  -1\right)  \\
&  =\left(  \prod_{j=0}^{\mu_{2}-1}(2\,m_{j}-1)\right)  \left(  2\left(
\mu_{1}-\mu_{2}\right)  -1\right)  -\left(  \sum_{j=0}^{\mu_{2}-1}2m_{j}%
+2(\mu_{1}-\mu_{2})\right)  +2n
\end{align*}

\item Let
\[
G(x_{0},\ldots,x_{\mu_{2}})=\prod_{j=0}^{\mu_{2}}(x_{j}-1)-\sum_{j=0}^{\mu
_{2}}x_{j}+2n
\]
over
\[
R=\{(x_{0},\ldots,x_{\mu_{2}}):\,x_{j}\geq2\cdot2\}.
\]

Then we have%
\[
d_{\mathrm{YHZ}}\geq\min_{x\in R}G\left(  x\right)
\]

\item Note
\[
\partial G/\partial x_{i}=\prod_{\substack{0\leq j\leq\mu_{2}\\j\neq i}%
}(x_{j}-1)-1>0\ \text{over }R
\]
Hence%
\begin{align*}
\min_{x\in R}G\left(  x\right)   &  =\prod_{j=0}^{\mu_{2}}(4-1)-\sum
_{j=0}^{\mu_{2}}4+2n\\
&  =3^{\mu_{2}+1}-4\left(  \mu_{2}+1\right)  +2n\\
&  =3\cdot3^{\mu_{2}}-4\left(  \mu_{2}+1\right)  +2n\\
&  =\left(  3^{\mu_{2}}-4\mu_{2}\right)  +\left(  2\cdot3^{\mu_{2}}-4\right)
+2n\\
&  \geq3^{\mu_{2}}-4\mu_{2}+2n
\end{align*}

\item Hence%
\[
d_{\mathrm{YHZ}}\geq3^{\mu_{2}}-4\mu_{2}+2n
\]

\end{enumerate}
\end{enumerate}
\end{enumerate}
\end{proof}

\end{document}